
\documentclass[12pt,onecolumn,draftclsnofoot]{IEEEtran}

%\IEEEoverridecommandlockouts                              % This command is only
                                                          % needed if you want to
                                                          % use the \thanks command

\title{Linear Transmission of Composite Gaussian Measurements over a Fading Channel under Delay Constraints}

\author{\IEEEauthorblockN{Onur Tan\IEEEauthorrefmark{1}, Deniz G\"{u}nd\"{u}z\IEEEauthorrefmark{2}, Jes\'{u}s G\'{o}mez Vilardeb\`{o}\IEEEauthorrefmark{1}}\\ \normalsize
\IEEEauthorblockA{\IEEEauthorrefmark{1}Centre Tecnol\`ogic de Telecomunicacions de Catalunya (CTTC), Barcelona, Spain.}\\
\IEEEauthorblockA{\IEEEauthorrefmark{2}Imperial College London, London, UK.}

\thanks{This work was supported by the Spanish Government under project TEC2013-44591-P (INTENSYV).}
Emails: \{onur.tan@cttc.es, d.gunduz@imperial.ac.uk, jesus.gomez@cttc.es\}
}

\usepackage{amsfonts}
\usepackage{amssymb}
\usepackage{amsthm}
\usepackage{amsmath}
\usepackage{dsfont}
\usepackage{amscd}
\usepackage{graphicx, subfigure}
\usepackage{blindtext, subfigure}
\usepackage{psfrag}
\usepackage{xcolor}

\usepackage{cite}
\usepackage{epsfig}
\usepackage{epstopdf}
\usepackage{eqnarray,amsmath}
\usepackage{algorithm,algorithmicx,algpseudocode}
\usepackage{setspace}

\algnewcommand{\Inputs}[1]{%
  \State \textbf{Inputs:}
  \Statex \hspace*{\algorithmicindent}\parbox[t]{.8\linewidth}{\raggedright #1}
}
\algnewcommand{\Initialization}[1]{%
  \State \textbf{Initialization:}
  \Statex \hspace*{\algorithmicindent}\parbox[t]{.8\linewidth}{\raggedright #1}
}
\makeatletter
\def\BState{\State\hskip-\ALG@thistlm}
\makeatother

\newtheorem{thm}{Theorem}
\newtheorem{lem}{Lemma}

%% THEOREMS ---------------------------------------------------------------
%\newtheorem{thm}{Theorem}
%\newtheorem{cor}[thm]{Corollary}
%\newtheorem{lem}[thm]{Lemma}
%\newtheorem{prop}[thm]{Proposition}
%\newtheorem{defn}{Definition}[section]
%\newtheorem{rem}{Remark}[section]
%%%% ----------------------------------------------------------------------

\begin{document}

\maketitle
\vspace{-1.7cm}

%%%%%%%%%%%%%%%%%%%%%%%%%%%%%%%%%%%%%%%%%%%%%%%%%%%%%%%%%%%%%%%%%%%%%%%%%%%%%%%%
\begin{abstract}
Delay constrained linear transmission (LT) strategies are considered for the transmission of composite Gaussian measurements over an additive white Gaussian noise fading channel under an average power constraint. If the channel state information (CSI) is known by both the encoder and decoder, the optimal LT scheme in terms of the average mean-square error distortion is characterized under a strict delay constraint, and a graphical interpretation of the optimal power allocation strategy is presented. Then, for general delay constraints, two LT strategies are proposed based on the solution to a particular multiple measurements-parallel channels scenario. It is shown that the distortion decreases as the delay constraint is relaxed, and when the delay constraint is completely removed, both strategies achieve the optimal performance under certain matching conditions. If the CSI is known only by the decoder, the optimal LT strategy is derived under a strict delay constraint. The extension for general delay constraints is shown to be hard. As a first step towards understanding the structure of the optimal scheme in this case, it is shown that for the multiple measurements-parallel channels scenario, any LT scheme that uses only a one-to-one linear mapping between measurements and channels is suboptimal in general.
\end{abstract}

\vspace{-0.45cm}
\begin{IEEEkeywords}
Linear transmission, delay constraint, composite of Gaussians, fading channel, water filling, wireless sensor networks.
\end{IEEEkeywords}

\section{Introduction}
\label{s:Introduction}
Near real-time monitoring of a physical phenomena is of great significance to many emerging network applications, such as monitoring of voltage, current magnitudes, active/reactive power values in smart grids (SGs)~\cite{GridOfFuture}, or temperature and humidity in forest fire detection networks~\cite{WSNSurvey}. To this end, wireless sensors are deployed throughout the physical network and the sensor measurements are delivered to a control center (CC) over wireless links. For the robust, reliable and efficient management of the underlying physical networks, near real-time and accurate reconstruction of the measurements at the CC becomes necessary. For example, in conventional state estimation for the electricity grid, measurements are collected once every two to four seconds and the state is updated once every few minutes~\cite{PowerSystemState}. However, more frequent state measurements and estimations are required for modern SGs, which inevitably imposes strict delay constraints on the transmission of measurements. As a further example, in forest fire detection networks~\cite{WSNForestDetection}, measurements of smoke and gas sensors along with camera images are used to detect fire, and the delay inevitably becomes a major constraint for the transmission. Thus, zero-delay linear transmission (LT), rather than advanced compression and channel coding techniques that span large codewords, is an attractive strategy for the transmission of sensor measurements in intelligent networks. This is because LT reduces both the delay and encoding complexity significantly; and accordingly limits the cost and energy requirements of the sensors.

LT of Gaussian sources has been extensively studied in the literature. Goblick showed in~\cite{Goblick} that zero-delay LT of a Gaussian source over an additive white Gaussian noise (AWGN) channel achieves the optimal mean-square error (MSE) distortion. In~\cite{OptimalLinearCoding}, the optimal LT scheme that matches an $r$-dimensional Gaussian signal to a $k$-dimensional AWGN vector channel is characterized. It is shown that the optimal LT performance can be achieved by mapping ordered sources to ordered channels in a one-to-one fashion. LT of a Gaussian source over a fading AWGN channel is studied in~\cite{LinearJointSourceChannelCoding}. It is shown that the optimal LT performance can be achieved by symbol-by-symbol processing, and increasing the block length does not provide any gain, as opposed to nonlinear coding schemes. In~\cite{SmartGridStateTrans}, LT of noisy vector measurements over a fading AWGN channel is studied under diagonal and general observation matrices. LT of vector Gaussian sources over static and fading multi-antenna channels is studied in \cite{Aguerri:ISWCS:13} and~\cite{MinDistortionTranmissionBasar}, respectively.

%There is also growing interest in the performance of LT for multi-user systems. For example, zero-delay LT of bivariate Gaussian source and noisy Gaussian observations over Gaussian multiple access channels are studied in~\cite{BivariateLapidoth} and~\cite{GastparUncoded}, respectively. The optimality of zero-delay LT holds in these models as well, up to a certain signal-to-noise ratio (SNR) threshold in the former, and at all SNR values in the latter.

We consider a wireless sensor node that collects measurements from $J$ Gaussian parameters. We discretize time into time slots (TSs), and assume that the CC asks for a measurement of a particular parameter from the sensor at each TS. The sensor takes one sample of the requested parameter at each TS, and transmits these samples to the CC over an AWGN fading channel under a given delay constraint. Note that, in contrast to multi-dimensional Gaussian source models studied in~\cite{OptimalLinearCoding},~\cite{MinDistortionTranmissionBasar},~\cite{LinearAnalogCoding}, where the sensor has the measurements of all the $J$ Gaussian parameters at the beginning of a TS, we assume that only one measurement is taken from the requested parameter at each TS.

We assume that each measurement must be delivered within $d$ TSs. Thereby, after each transmission, the CC estimates the measurement whose deadline is just about to expire. We assume that the channel gain from the sensor to the CC is independent and identically distributed (i.i.d.) over TSs. We consider two different scenarios regarding the channel state information (CSI)$\colon$In the first scenario, the CSI is assumed to be available to both the encoder and decoder, while in the second scenario, only the decoder has CSI. Our goal is to estimate all the requested measurements at the CC within their delay constraints with the minimum MSE distortion.

We focus explicitly on LT strategies. Assuming that the CSI is known by both the encoder and decoder, we first derive the optimal LT strategy under a strict delay constraint $(d=1)$, and show that the optimal power allocation and the corresponding distortion can be interpreted as \textit{water-filling reflected on a reciprocal mirror}. Exploiting the results of~\cite{OptimalLinearCoding}, we also derive the optimal LT strategy under a strict delay constraint for a particular scenario in which the sensor transmits the measurement vector over parallel AWGN fading channels at each TS. Then, building on our previous works~\cite{DelayConstrainedLT},~\cite{ICCPaper}, and exploiting the optimal LT strategy derived for multiple measurements-parallel channels scenario above, we propose two LT strategies for general delay constraints. In both strategies, measurements are first collected and stored in a buffer whose size depends on the delay constraint, and then, are transmitted to the CC over multiple channel accesses within the delay constraint. The two strategies consider different measurement selection criterias, which are used to select the appropriate stored measurement to be transmitted at each channel access. We then derive the theoretical lower bound (TLB) and the LT lower bound (LLB) on the achievable MSE distortion. We characterize the MSE distortion achieved by the proposed LT schemes, as well as the TLB and the LLB under various power and delay constraints. We show that the MSE distortion diminishes as the delay constraint is relaxed if the sensor is capable of measuring more than one system parameter, i.e., $J>1$. However, if $J=1$, then relaxing the delay constraint does not provide any improvement in LT performance as argued in~\cite{OptimalLinearCoding}. When the fading channel follows a discrete distribution and the delay constraint is completely removed, we show that the proposed LT strategies meet the TLB under certain matching conditions between the channel states and the paramater variances; and hence, achieve the optimal performance.

When the CSI is known only by the decoder, we first derive the optimal LT strategy under a strict delay constraint. Then, we consider the multiple measurements-parallel channels scenario under a strict delay constraint and $J>1$ assumption, and show that the optimal LT performance cannot be achieved by an LT scheme that is constrained to use only a one-to-one linear mapping between measurements and channels, as opposed to the $J=1$ case~\cite{LinearJointSourceChannelCoding}, and the CSI is known by both the encoder and decoder~\cite{OptimalLinearCoding}, respectively. Since the optimal LT strategy is elusive for $J>1$, we do not consider LT strategies for larger delay constraints. Finally, we derive the TLB on the achievable MSE distortion.

The rest of the paper is structured as follows. The system model is presented in Section~\ref{s:SystemModel}. In Sections~\ref{s:StrictDelayConstraint} to~\ref{s:Bounds} CSI is assumed at both the encoder and decoder. In Section~\ref{s:StrictDelayConstraint}, we study the optimal LT strategy under a strict delay constraint. Two LT strategies are proposed for general delay constraints in Section~\ref{s:AchievableLTStrategies}. In Section~\ref{s:Bounds}, we characterize the TLB and LLB on the achievable MSE distortion. In Section~\ref{s:NoEncoderCSI}, the optimal LT strategy is derived under a strict delay constraint along with the TLB, when the CSI is known only by the decoder. Section~\ref{s:NumericalResults} presents extensive numerical results, and finally, Section~\ref{s:Conclusions} concludes the paper.

\begin{figure}
\centering
\includegraphics[width=0.6\textwidth]{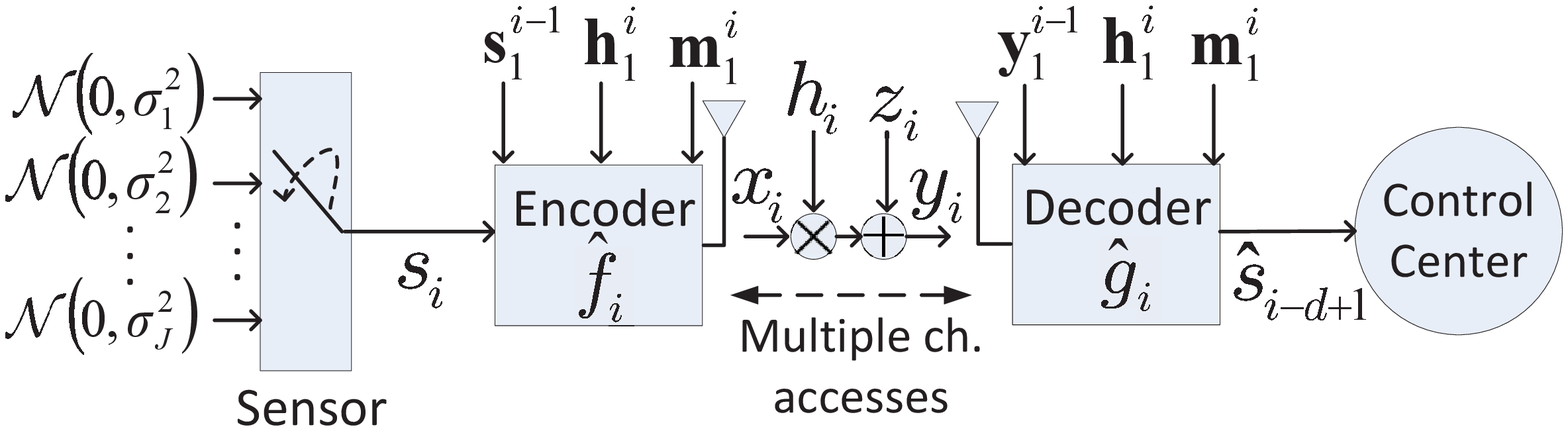}\caption{The illustration of the transmission model from the perspective of the sensor with multiple channel accesses.}%
\label{fig:TransmissionModel}%
\end{figure}

\section{System Model}
\label{s:SystemModel}
We consider a CC that monitors the operation of a system through a wireless sensor (Fig.~\ref{fig:TransmissionModel}), which is capable of measuring $J$ distinct system parameters. The $j$th system parameter is modelled as a zero-mean Gaussian random variable (r.v.) with variance $\sigma_{j}^{2}$, i.e., $\mathcal{N}(0,\sigma_{j}^{2})$, for $j\in[1\text{:}J]$, where $[1\text{:}J]$ denotes the set $\{1,2,\ldots,J\}$. These system parameters are independent from each other, and their realizations are i.i.d. over time. In order to monitor the network operation, the CC requests the measurement of one system parameter from the sensor at each TS. The index of the requested system parameter at each TS is a r.v. denoted by $M\in[1\text{:}J]$, with distribution $p_{M}(m)$, which is also i.i.d. over time. Based on these requests, the sensor takes one measurement of the requested parameter $m$ at each TS. Thereby, the model is that of a composite source introduced in Chapter $6$ of~\cite{RDTBerger}. The source $S$ can be described as a composite source comprised of $J$ distinct components (subsources), each operating independently of the others. In our model, each component produces data according to a Gaussian probability distribution $P(\cdot|m)=\mathcal{N}(0,\sigma^2_m)$. The set $G$ of all sources comprises the composite source $\{S_i,G\}$. In our case,
\begin{equation}
G=\left[\mathcal{N}(0,\sigma^2_1),\mathcal{N}(0,\sigma^2_2),\ldots,\mathcal{N}(0,\sigma^2_J)\right].
\end{equation}
The index of the requested system parameter $m$ generates the sequence of positions assumed by the switch in Fig.~\ref{fig:TransmissionModel}. In our model both the encoder and the decoder possess the exact knowledge of this sequence. Notice that, in the particular case in which the encoder and decoder are uninformed about this sequence, the composite source $\{S_i,G\}$ is equivalent to a mixture of Gaussian distributions, i.e., $P_S(s)=\sum_{m=1}^J P_M(m)P_{S|M}(s|m)$.

We assume that the CC imposes a maximum delay constraint of  $d\in\mathbb{Z}^{+}$ on the measurements, that is, the measurement requested in a TS needs to be transmitted within the following $d$ TSs; otherwise, it becomes stale. The collected sensor measurements are transmitted to the CC over a fading channel with zero-mean and unit variance AWGN. The channel output at TS $i$ is given by $y_i=h_ix_i+z_i$, where $x_i$ is the channel input, $z_i$ is the additive noise with $Z\sim\mathcal{N}(0,1)$, and $h_i$ is the fading state of the channel. We consider a fading channel model, and assume that the fading coefficient $H_i\in\mathbb{R}$ is modelled as a r.v. i.i.d. over time with probability distribution $p_{H}(h)$.

We define $\mathbf{m}_{i}^{l}=[m_{i},m_{i+1},...,m_{l}]$ as the sequence of indices of requested parameters at TSs $[i\text{:}l]$ for $i\leq l$. The measurement sequence is defined similarly as $\mathbf{s}_{i}^{l}=[s_i,\ldots, s_{l}]$, where the $i$-th entry $s_i$ is the measured value of the requested parameter $m_i$ at TS $i$. Therefore, the sequence $\mathbf{s}_{i}^{l}$ has independent entries, where the $i$-th entry comes from a Gaussian distribution with variance $\sigma_{m_i}^{2}$. Note that in our composite Gaussian measurements model, conditioned on the requested parameter index, which is known by both the encoder and decoder, the source samples follow Gaussian distributions with different variance values.

The channel state and the output sequences at TSs $[i\text{:}l]$ are similarly defined as $\mathbf{h}_{i}^{l}=[h_{i},...,h_{l}]$ and $\mathbf{y}_{i}^{l}=[y_{i},...,y_{l}]$, respectively. We assume that both the encoder and decoder at TS $i$ know all the past channel states, $\mathbf{h}_{1}^{i-1}$, and the indices of requested parameters, $\mathbf{m}_{1}^{i}$, as well as the statistics of the measured parameters, $\sigma_{m}^{2}$, the parameter requests, $p_{M}(m)$, and the channel, $p_{H}(h)$. In the first part of the paper we assume that both the encoder and decoder know the current channel state, $h_i$. Note that this assumption might be hard to realize for a fast fading channel model; on the other hand, our system model can be considered as instances of a slow fading channel. Typically, there will be a large number of sensors in the system, and each sensor is going to be scheduled only once in a while; and hence, each TS in our system model can be considered as one instance of a slow fading channel when a particular sensor is scheduled to transmit. Since these instances are separated from each other due to the transmission of other sensors, corresponding channel states are modeled as i.i.d., and are assumed to be known by both the encoder and decoder, as channel estimation and CSI feedback can be carried out between two transmissions of the same sensor. In Section~\ref{s:NoEncoderCSI} we will consider the scenario in which the CSI is known only by the decoder.

\subsubsection{Encoding Function}
The encoding function $\hat{f}_i:\mathbb{R}^{i}\times\mathbb{R}^{i}\times\mathbb{R}^{i}\rightarrow\mathbb{R}$, maps $\mathbf{s}_{1}^{i}$, $\mathbf{h}_{1}^{i}$, and $\mathbf{m}_{1}^{i}$ to a channel input $x_i$ at each TS $i$, i.e., $x_i=\hat{f}_i(\mathbf{s}_{1}^{i},\mathbf{h}_{1}^{i},\mathbf{m}_{1}^{i})$. An average power constraint of $P$ is imposed on the encoding function$\colon$
\[
\bar{P}\triangleq\lim_{n\to\infty}\frac{1}{n}\sum\limits_{i=1}^{n}\mathrm{E}_{M,H,S}\left[|X_i|^2\right]
\leq P,
\]
\noindent where $\mathrm{E}_{M,H,S}\left[\cdot\right]$ denotes the expectation over $M$, $H$ and $S$. For any generic transmission policy, the encoding function $\hat{f}_i$, at TS $i$, may consider to use any combination of $\mathbf{s}_{1}^{i}$, $\mathbf{h}_{1}^{i}$, and $\mathbf{m}_{1}^{i}$ to generate $x_i$. This gives rise to a time-varying encoding scheme.

\subsubsection{Decoding Function}
At the end of TS $i$, the goal of the CC is to estimate with the minimum MSE distortion, the measurement $s_{i-d+1}$, which has been requested exactly $d-1$ TSs ago, and is about to expire. The decoding function $\hat{g}_i:\mathbb{R}^{i}\times\mathbb{R}^{i}\times\mathbb{R}^{i}\rightarrow\mathbb{R}$, estimates $\hat{s}_{i-d+1}$ based on $\mathbf{y}_{1}^{i}$, $\mathbf{h}_{1}^{i}$, and $\mathbf{m}_{1}^{i}$, i.e., $\hat{s}_{i-d+1}=\hat{g}_i(\mathbf{y}_{1}^{i},\mathbf{h}_{1}^{i},\mathbf{m}_{1}^{i})$. The MSE distortion is given by$\colon$

 \[
 \bar{D}\triangleq\lim_{n\to\infty}\frac{1}{n}\sum\limits_{i=d}^{n}\mathrm{E}_{M,H,S,Z}\left[|S_{i-d+1}-\hat
 {S}_{i-d+1}|^2\right].
 \]

The decoding function $\hat{g}_i$, at TS $i$, reconstructs the measurement using $\mathbf{y}_{1}^{i}$, $\mathbf{h}_{1}^{i}$, and $\mathbf{m}_{1}^{i}$. Hence, similarly to the encoder, the decoder may be time-varying.

We are interested only in LT policies in which $\hat{f}_i$'s are restricted to be linear functions of the sensor measurements, $s_i$'s, i.e., $\hat{f}_i(\mathbf{s}_{1}^{i},\mathbf{h}_{1}^{i},\mathbf{m}_{1}^{i})\triangleq f_i(\mathbf{h}_{1}^{i},\mathbf{m}_{1}^{i})\cdot\mathbf{s}_{1}^{i}$, where $f_i(\mathbf{h}_{1}^{i},\mathbf{m}_{1}^{i})$ is a vector. Under this linearity constraint, the optimal estimators at the receiver, $\hat{g}_i$'s, are also linear functions of the channel outputs, $y_i$'s, i.e., $\hat{g}_i(\mathbf{y}_{1}^{i},\mathbf{h}_{1}^{i},\mathbf{m}_{1}^{i})\triangleq g_i(\mathbf{h}_{1}^{i},\mathbf{m}_{1}^{i})\cdot\mathbf{y}_{1}^{i}$, where $g_i(\mathbf{h}_{1}^{i},\mathbf{m}_{1}^{i})$ is a vector. Hereafter, we will refer to $f_i$ and $g_i$ for the encoding and decoding functions at TS $i$, respectively.

\section{Strict Delay Constraint}
\label{s:StrictDelayConstraint}
We first derive the optimal LT strategy under a strict delay constraint $(d=1)$, and characterize the minimum achievable MSE distortion. In this scenario, optimal LT performance is achieved by transmitting only the most recent measurement since all the previous measurements have expired, and transmitting an expired measurement cannot help the estimation of the current measurement since the measurements are independent. Then the encoding function $f_i(h_i,m_i)$ at TS $i$ is a scalar. Given the encoding function, the decoding function $g_i(h_i,m_i)$ that minimizes the MSE for Gaussian r.v.s is the linear MMSE estimator~\cite{NetworkInformationTheory}, and is also a scalar.

In particular, for a measurement $s_i$ with variance $\sigma_{m_i}^2$, and channel output $y_i=h_i \cdot f_i(h_i,m_i) \cdot s_i+z_i$ at TS $i$, the decoding function can be written explicitly as$\colon$
\vspace{-0.05cm}
\begin{align}
\label{eq:optimallineardecoder-CSI}
g_i(h_i,m_i)=\frac{\mathrm{E}_{S,Z}[S_iY_i]}{\mathrm{E}_{S,Z}[Y_i^2]}=\frac{|h_i|f_i(h_i,m_i)\sigma_{m_i}^2}{|h_i|^2f_i(h_i,m_i)^2\sigma_{m_i}^2+1}.
\end{align}

In the following lemma we show that there is no loss of optimality by limiting the encoding function to be time-invariant.

\begin{lem}
\label{Lemma1}
Under a strict delay constraint there is no loss of optimality by considering only time-invariant encoding functions, i.e., $f_i(h_i,m_i)=f(h_i,m_i)$ $\forall i$.
\end{lem}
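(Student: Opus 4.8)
The plan is to reduce the time-varying problem to a single-letter (``single-shot'') optimization and then exploit convexity. First I would note that, since $(S_i,Z_i,H_i,M_i)$ are independent of the past and the optimal decoder is the LMMSE rule in~\eqref{eq:optimallineardecoder-CSI}, the encoder $f_i$ at TS $i$ contributes instantaneous power $\mathrm{E}_{M,H,S}[|X_i|^2]=\mathrm{E}_{M,H}\!\left[f_i(H,M)^2\sigma_M^2\right]$ and instantaneous distortion $D_i=\mathrm{E}_{M,H}\!\left[\sigma_M^2/(|H|^2 f_i(H,M)^2\sigma_M^2+1)\right]$. Changing variables to the power-allocation function $q(h,m)\triangleq f(h,m)^2\sigma_m^2\ge 0$, I would introduce the single-shot distortion--power trade-off
\[
\phi(p)\triangleq\inf\Bigl\{\,\mathrm{E}_{M,H}\Bigl[\tfrac{\sigma_M^2}{|H|^2 q(H,M)+1}\Bigr]:\ q\ge 0,\ \mathrm{E}_{M,H}[q(H,M)]\le p\,\Bigr\},
\]
which is exactly the minimum distortion attainable by a time-invariant encoder of average power at most $p$.

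The core step is to prove that $\phi$ is non-increasing and convex (hence continuous on $(0,\infty)$). Monotonicity is immediate since enlarging $p$ enlarges the feasible set. For convexity I would take $\epsilon$-optimal allocations $q_0,q_1$ for budgets $p_0,p_1$, note that $q_\lambda\triangleq\lambda q_0+(1-\lambda)q_1\ge 0$ is feasible for $p_\lambda\triangleq\lambda p_0+(1-\lambda)p_1$, and use convexity of the scalar map $q\mapsto\sigma_m^2/(|h|^2 q+1)$ on $q\ge 0$ to obtain $\phi(p_\lambda)\le\mathrm{E}_{M,H}[\sigma_M^2/(|H|^2 q_\lambda+1)]\le\lambda\phi(p_0)+(1-\lambda)\phi(p_1)+\epsilon$, and then let $\epsilon\downarrow 0$. (Equivalently, $\phi$ is the optimal-value / perturbation function of a convex program, hence convex.)

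With these properties, I would close the argument as follows. For an arbitrary admissible LT policy with per-TS powers $p_i$ and distortions $D_i$, the definition of $\phi$ gives $D_i\ge\phi(p_i)$ for every $i$ (if one allows $f_i$ to depend on the past as well, condition on the past, apply this bound, and remove the conditioning using Jensen's inequality and convexity of $\phi$). Hence, for every $n$,
\[
\frac1n\sum_{i=1}^n D_i\ \ge\ \frac1n\sum_{i=1}^n\phi(p_i)\ \ge\ \phi\!\Bigl(\tfrac1n\sum_{i=1}^n p_i\Bigr),
\]
by Jensen's inequality. Letting $n\to\infty$ and invoking continuity of $\phi$, the power constraint $\bar P\le P$, and monotonicity of $\phi$, we get $\bar D\ge\phi(\bar P)\ge\phi(P)$. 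Since the time-invariant encoder $f(h,m)=\sqrt{q^\star(h,m)/\sigma_m^2}$ built from an optimal (or $\epsilon$-optimal) allocation $q^\star$ for $\phi(P)$ meets the power constraint and achieves $\bar D=\phi(P)$ at every TS, no time-varying policy does strictly better; this proves the lemma and, as a bonus, pins down the stationary encoder whose explicit form is derived next.

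I expect the main obstacle to be the convexity and continuity of $\phi$ together with the existence of the optimizing single-shot allocation: as $q$ ranges over a function space, one should either argue attainment via a compactness / lower-semicontinuity argument, or --- more cheaply --- carry $\epsilon$-optimal allocations through the entire chain. The remaining ingredients (the per-TS bound $D_i\ge\phi(p_i)$, the two applications of Jensen, and the passage to the limit) are routine once $\phi$ is established to be convex and non-increasing.
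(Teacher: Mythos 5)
Your proof is correct, and it rests on the same single ingredient as the paper's: convexity of the per-TS distortion in the per-TS power, combined with Jensen's inequality and the linearity of the power constraint. The organization differs. The paper averages the encoders directly: it defines $f(h,m)^2\triangleq\lim_n\frac1n\sum_i f_i(h,m)^2$ pointwise in $(h,m)$, notes this time-invariant encoder consumes the same average power, and applies Jensen to the distortion functional in the variable $f_i(h,m)^2$. You instead introduce the single-shot distortion--power trade-off $\phi(p)$, prove it is convex and non-increasing as a perturbation function, and apply Jensen to $\phi$ evaluated at the scalar per-TS powers. The two routes are interchangeable here, but yours buys some rigor the paper elides: it does not presuppose that the Ces\`aro limit defining $f(h,m)^2$ exists (a $\liminf$ over scalar budgets suffices), it explicitly handles encoders that depend on the past by conditioning and re-applying Jensen, and it cleanly separates achievability (the stationary encoder attaining $\phi(P)$) from the converse. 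The price is the extra bookkeeping you flag yourself ($\epsilon$-optimal allocations, continuity of $\phi$); the paper's pointwise-averaging shortcut avoids all of that at the cost of the unstated existence assumption.
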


\begin{proof}
%\begin{align}
%\hspace{-0.1cm}\bar{D}&=\lim_{n\to\infty}\frac{1}{n}\sum\limits_{i=1}^{n}\mathrm{E}_{M,H,S,Z}\left[|S_{i}-\hat
% {S}_{i}|^2\right],\nonumber\\
%\label{eq:TimeInvariant-1}
%&=\lim_{n\to\infty}\frac{1}{n}\sum\limits_{i=1}^{n}\mathrm{E}_{M,H}\left[\frac{\sigma_{m}^2}{|h|^2f_i(h,m)^2\sigma_{m}^2+1}\right],\\
%\label{eq:TimeInvariant-2}
%&\ge\mathrm{E}_{M,H}\left[\frac{\sigma_{m}^2}{|h|^2f(h,m)^2\sigma_{m}^2+1}\right],
%\end{align}
\vspace{-0.05cm}
\begin{align}
\label{eq:TimeInvariant-1}
\hspace{-0.1cm}\bar{D}&=\lim_{n\to\infty}\frac{1}{n}\sum\limits_{i=1}^{n}\mathrm{E}_{M,H,S,Z}\left[|S_{i}-\hat
 {S}_{i}|^2\right]=\lim_{n\to\infty}\frac{1}{n}\sum\limits_{i=1}^{n}\mathrm{E}_{M,H}\left[\frac{\sigma_{m}^2}{|h|^2f_i(h,m)^2\sigma_{m}^2+1}\right],\\
\label{eq:TimeInvariant-2}
&\ge\mathrm{E}_{M,H}\left[\frac{\sigma_{m}^2}{|h|^2f(h,m)^2\sigma_{m}^2+1}\right],
\end{align}

\noindent where~(\ref{eq:TimeInvariant-1}) is the average MSE distortion under a strict delay constraint $(d=1)$; and defining $f(h,m)^2\triangleq\lim_{n\to\infty}\frac{1}{n}\sum\limits_{i=1}^{n}f_i(h,m)^2$ such that $f(h,m)$ satisfies the average power constraint $P$,~(\ref{eq:TimeInvariant-2}) follows from the convexity of the function $\mathrm{E}_{M,H}\left[\frac{\sigma_{m}^2}{|h|^2f_i(h,m)^2\sigma_{m}^2+1}\right]$ in terms of $f_i(h,m)^2$, and the equality holds iff $f_i(h,m)=f(h,m)$ for $\forall i$ and due to the strict convexity of the aforementioned function. Thus, the time-invariant encoding function $f(h,m)$, which is a function of only $h$ and $\sigma_{m}^2$, does not lead to any loss in optimality.
\end{proof}

The time-invariant encoding function $f(h,m)$ leads to a time-invariant decoding function $g(h,m)$. In the rest of the paper, we will consider time-invariant encoding and decoding functions without loss of optimality. Then, the MSE distortion, $\bar{D}=\mathrm{E}_{M,H,S,Z}[|S-\hat{S}|^2]$, and the average power, $\bar{P}=\mathrm{E}_{M,H,S}[|X|^2]$, can be written explicitly as functions of $h$ and $\sigma_{m}^{2}$, as follows$\colon$
%\vspace{-0.05cm}
\begin{align}
\label{eq:s-PAoverfadingchannelandsources}
&\bar{D}=\sum\limits_{m=1}^{J}p_M(m)\int_\mathbb{R}\frac{\sigma_{m}^2}{|h|^2f(h,m)^2\sigma_{m}^2+1}p_H(h)\mathrm{d}h,\\
&\bar{P}=\sum\limits_{m=1}^{J}p_M(m)\int_\mathbb{R}f(h,m)^2\sigma_{m}^2p_H(h)\mathrm{d}h.
\end{align}

The optimal linear encoding function $f^{\ast}(h,m)$ is found as the solution to the convex optimization problem $\bar{D}^{\ast}\triangleq\underset{f}{\text{min}}\,\bar{D}$, subject to the average power constraint $\bar{P}\leq P$. From the Karush-Kuhn-Tucker (KKT) optimality conditions~\cite{ConvexOptimization}, we obtain$\colon$

\vspace{-0.05cm}
\begin{align}
f^{\ast}(h,m)=\sqrt{\left[ \frac{\lambda^{\ast}}{|h|\sigma_{m}}-\frac{1}{|h|^2\sigma_{m}^2}\right]  ^{+}},
\end{align}

\noindent where $\lambda^{\ast}$ is the optimal Lagrange multiplier, such that $\bar{P}=P$.

The optimal power allocation and the corresponding distortion are given by$\colon$
%\vspace{-0.05cm}
\begin{align}
\label{eq:SingleChannelCase_OptPowerAllocation}
P^{\ast}(h,m)&=\frac{\sigma_m}{|h|}\left[\lambda^{\ast}-\frac{1}{|h|\sigma_m}\right]^{+},\\
\label{eq:SingleChannelCase_CorrespondingDistortion}
D^{\ast}(h,m)&=\frac{\sigma_m}{|h|}\min\left(\frac{1}{\lambda^{\ast}},|h|\sigma_m\right),
\end{align}

\noindent where $\bar{D}^{\ast}=\mathrm{E}_{M,H}\left[D^{\ast}(h,m)\right]$ and $\mathrm{E}_{M,H}\left[P^{\ast}(h,m)\right]=P$.

\begin{figure}
\centering
\includegraphics[width=0.6\textwidth]{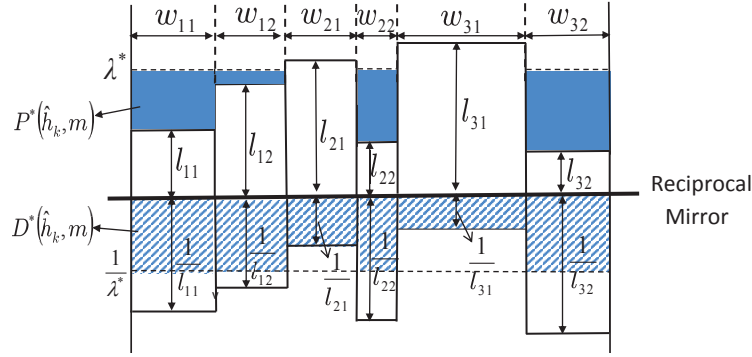}
\caption{Water-filling reflected on a reciprocal mirror.}
\label{fig:ReflectionMirror}
\end{figure}

In Fig.~\ref{fig:ReflectionMirror}, we present a graphical interpretation of the optimal power allocation and the corresponding distortion for $J=2$ parameters with variances $\sigma_{1}^{2}$ and $\sigma_{2}^{2}$, which are requested with probabilities $p_{M}(1),p_{M}(2)$, respectively. We also consider a discrete fading channel with three states, where the $k$th state, $\hat{h}_{k}$, is observed with probability $p_{H}(\hat{h}_{k})$, $k=1,2,3$. Fig.~\ref{fig:ReflectionMirror} depicts rectangles that are placed upon a mirror surface and their reciprocally scaled images below. Rectangles represent all possible source-channel pairs $\{\sigma_m,\hat{h}_k\}$, where $l_{km}\triangleq \frac{1}{|\hat{h}_{k}|\sigma_{m}}$ and $w_{km}\triangleq \frac{\sigma_{m}}{|\hat{h}_{k}|}$ indicate the height and width of the rectangles, respectively. The total power is poured above the level $l_{km}$ up to the water level $\lambda^{\ast}$ across the rectangles placed upon the mirror. The optimal power allocated to the source-channel pair $\{\sigma_m,\hat{h}_k\}$ is given by the shaded area below the water level and above $l_{km}$. The corresponding distortion values are found by simply looking at the reciprocally scaled reflections of the rectangles and the water level on the mirror. If $\frac{1}{l_{km}}>\frac{1}{\lambda^{\ast}}$, distortion is given by the width $w_{km}$ times the reciprocal of the water level $\frac{1}{\lambda^{\ast}}$, and if $\frac{1}{l_{km}}\le\frac{1}{\lambda^{\ast}}$, distortion is $\sigma_m^2$, which are illustrated as dashed areas in Fig.~\ref{fig:ReflectionMirror}. We call this as \textit{water-filling reflected on a reciprocal mirror}.

\vspace{-0.05cm}
\subsection{Multiple Measurements and Parallel Channels}
\label{s:ParallelChannelsCase}
Next, we assume that the CC requests $N>1$ measurements from the sensor at each TS, and the sensor transmits a length-$N$ measurement vector over $N$ parallel orthogonal AWGN fading channels under a strict delay constraint $(d=1)$. For this scenario, we characterize the optimal LT strategy by generalizing the results of~\cite{OptimalLinearCoding} derived for Gaussian vector sources to our composite Gaussian measurements model. This scenario differs from the system model defined in Section~\ref{s:SystemModel}, since we allow to take $N$ measurements at each TS as opposed to taking only one measurement at each TS. However, we will exploit the optimal LT strategy in this setting for the construction of the proposed transmission strategies in Section~\ref{s:AchievableLTStrategies}, as well as for characterizing the LLB in Section~\ref{s:LinearLowerBound}.

Only for this scenario, we define $\mathbf{m}=\left[m_1,...,m_N \right]$ as the vector of indices of $N$ requested parameters at a particular TS. Then, the sensor takes the length-$N$ measurement vector $\mathbf{s}=\left[s_{1}, \ldots, s_{N}\right]$ according to the parameters indicated by $\mathbf{m}$, i.e., $s_1$ is the measured value of parameter $m_1$. For a strict delay constraint $(d=1)$, the optimal LT performance is achieved by transmitting only the most recent measurement vector. Similarly to Lemma~\ref{Lemma1}, the encoding function can be limited to a time-invariant $N\times N$ square matrix $\mathbf{F_{h,m}}$ without loss of optimality, where subscripts $\mathbf{h}$ and $\mathbf{m}$ indicate the dependence of the encoding matrix on the realizations of $\mathbf{h}$ and $\mathbf{m}$. We assume that the $N$ channels are i.i.d with distribution $p_H(h)$, and $\mathbf{H}$ is defined as the $N\times N$ diagonal channel matrix. The diagonal elements of $\mathbf{H}$ are denoted by a channel vector $\mathbf{h}=\left[h_{1},\ldots,h_{N}\right]$ at a particular TS. The length-$N$ channel output vector at that particular TS is given by $\mathbf{y}=\mathbf{Hx}+\mathbf{z}$, where $\mathbf{x}$ is the length-$N$ channel input vector and $\mathbf{z}$ is the length-$N$ additive noise vector with $\mathbf{z}\sim\mathcal{N}(\mathbf{0,I})$.

The encoder at any TS maps its measurement vector $\mathbf{s}$, to a channel input vector $\mathbf{x}$, i.e., $\mathbf{x}=\mathbf{F_{h,m}}\mathbf{s}$. An average power constraint of $P$ is imposed on the encoding function$\colon$
\vspace{-0.05cm}
\begin{align}
\label{eq:AveragePowerParallelChannels}
\bar{P}=\frac{1}{N}\operatorname{Tr}{\big\{\mathrm{E}_{M,H,S}[\mathbf{xx}^{T}]\big\}}=\frac{1}{N}\operatorname{Tr}{\big\{\mathrm{E}_{M,H}[\mathbf{ F_{h,m}}\mathbf{C_{s}}{\mathbf{F}}^T_{\mathbf{h,m}}]\big\}}
 \leq P,
\end{align}

\noindent where $\mathbf{C_{s}}=\mathrm{E}_{S}[\mathbf {s}\mathbf {s}^T]$.

Given the encoding function, the decoding function that minimizes the MSE for a Gaussian random vector is the $N\times N$ linear MMSE estimator matrix $\mathbf{G_{h,m}}$~\cite{NetworkInformationTheory}, which is also time-invariant. Similarly to $\mathbf{F_{h,m}}$, subscripts $\mathbf{h}$ and $\mathbf{m}$ indicate the dependence of the decoding matrix on the realizations of $\mathbf{h}$ and $\mathbf{m}$. For the measurement vector $\mathbf{s}$, and the channel output vector $\mathbf{y}$, at any TS, we have$\colon$
\vspace{-0.2cm}
\begin{align}\label{eq:OptimalLinearDecoderRelaxedDelay}
\mathbf{G_{h,m}}=\mathbf{C_{sy}}\mathbf{C_{y}^{-1}}=\mathbf{C_{s}}{\mathbf{F}}^T_{\mathbf{h,m}}\mathbf{H}^T\boldsymbol\Phi,
\end{align}

\noindent where $\mathbf{C_{sy}}=\mathrm{E}_{S,Z}[\mathbf {s}\mathbf {y}^T]$, $\mathbf{C_{y}}=\mathrm{E}_{S,Z}[\mathbf {y}\mathbf {y}^T]$ and $\boldsymbol\Phi\triangleq (\mathbf{H}{\mathbf{F_{h,m}}}\mathbf{C_{s}}{\mathbf{F}}^T_{\mathbf{h,m}}\mathbf{H}^T+\mathbf{I})^{-1}$.

At any TS, the CC estimates the most recent measurement vector $\mathbf{s}$ as $\mathbf{\hat{s}}$, i.e., $\mathbf{\hat{s}}=\mathbf{G_{h,m}}\mathbf{y}$. The MSE distortion is given by$\colon$

\begin{align}
\label{eq:AverageDistortionParallelChannels}
 \bar{D}=\frac{1}{N}\operatorname{Tr}{\big\{\mathrm{E}_{M,H,S,Z}[\left|\mathbf{s}-\mathbf{\hat
 {s}}||\mathbf{s}-\mathbf{\hat{s}}|^{T}\right]\big\}}=\frac{1}{N}\operatorname{Tr}{\big\{\mathrm{E}_{M,H}[\mathbf{C_{s}}-\mathbf{C_{s}}{\mathbf{F}}^T_{\mathbf{h,m}}\mathbf{H}^T\boldsymbol\Phi\mathbf{H}{\mathbf{F_{h,m}}}\mathbf{C_{s}}]\big\}}.
\end{align}

The optimal linear encoding matrix $\mathbf{F}^{\ast}_{\mathbf{h,m}}$, is found as the solution to the convex optimization problem $\bar{D}^{\ast}\triangleq\underset{\mathbf{F_{h,m}}}{\text{min }}\bar{D}$, subject to the average power constraint $\bar{P}\leq P$. For a set of static parallel AWGN channels and Gaussian vector sources, the optimal linear encoding matrix transmits one measurement over each channel~\cite{OptimalLinearCoding}. The optimal mapping between channels and measurements is as follows: We first reorder the measurement vector $\mathbf{s}$ to obtain $\mathbf{\bar{s}}=[s_{(1)},\ldots,s_{(N)}]$, such that $\sigma_{m_{(1)}}^{2}\leq\sigma_{m_{(2)}}^{2}\leq\cdots\leq\sigma_{m_{(N)}}^{2}$, and reorder the channel vector $\mathbf{h}$ to obtain $\mathbf{\bar{h}}=\left[{h}_{(1)},\ldots,{h}_{(N)}\right]$, such that $|{h}_{(1)}|\leq|{h}_{(2)}|\leq\cdots\leq|{h}_{(N)}|$. Then, the optimal linear encoding matrix $\mathbf{F}^{\ast}_{\mathbf{h,m}}$ is diagonal with entries $\left[f_{(1)}(h_{(1)},m_{(1)}),\ldots,f_{(N)}(h_{(N)},m_{(N)})\right]$, and it maps the ordered measurements to ordered channel states. In order to find the diagonal entries of $\mathbf{F}^{\ast}_{\mathbf{h,m}}$, we can explicitly rewrite the convex optimization problem by using the optimal mappings derived in~\cite{OptimalLinearCoding}, as follows$\colon$

\begin{equation}
\begin{aligned}
\label{opt:ParallelChannelsOptProblem}
\bar{D}^{\ast}\triangleq &\underset{f_{(t)}}{\text{ min }} \mathrm{E}_{M_{(t)},H_{(t)}}\Bigg[\frac{1}{N}\sum\limits_{t=1}^{N}\frac{\sigma_{m_{(t)}}^2}{|{h}_{(t)}|^2{f}_{(t)}(h_{(t)},m_{(t)})^2\sigma_{m_{(t)}}^2+1}\Bigg]\\
&\text{s.t. }
\mathrm{E}_{M_{(t)},H_{(t)}}\Bigg[\frac{1}{N}\sum\limits_{t=1}^{N}{f}_{(t)}(h_{(t)},m_{(t)})^2\sigma_{m_{(t)}}^2\Bigg] \le P,
\end{aligned}
\end{equation}

\noindent where the expectation is taken over $M_{(t)}$ and $H_{(t)}$ for $t\in[1\text{:}N]$. The $t$-th smallest entry of the requested parameter vector $\mathbf{m}=[m_{1},m_{2},\ldots,m_{N}]$, is denoted by the r.v. $M_{(t)}\in[1\text{:}J]$ with the order statistics $p_{M_{(t)}}(m)$. Without loss of generality, we assume that ordering the entries of $\mathbf{m}$ in ascending order, i.e., $m_{(1)}\leq m_{(2)}\leq\cdots\leq m_{(N)}$, implies ordering the entries of the measurement vector $\mathbf{s}$ in ascending variances, i.e., $\sigma_{m_{(1)}}^{2}\leq\sigma_{m_{(2)}}^{2}\leq\cdots\leq\sigma_{m_{(N)}}^{2}$. Similarly, the $t$-th smallest entry of the channel vector $\mathbf{h}=[h_{1},h_{2},\ldots,h_{N}]$ is denoted by the r.v. $H_{(t)}\in\mathbb{R}$ with the order statistics $p_{H_{(t)}}(h)$.

The optimal linear encoding matrix $\mathbf{F}^{\ast}_{\mathbf{h,m}}$ with diagonal entries ${f}_{(t)}^{\ast}(h_{(t)},m_{(t)})$ for $t\in[1\text{:}N]$, can be found from the Lagrange and the KKT conditions as follows$\colon$

\begin{align}
\label{OptimalEncoderRelaxed}
{f}_{(t)}^{\ast}(h_{(t)},m_{(t)})=\sqrt{\left[\frac{\delta^{\ast}}{|{h}_{(t)}|{\sigma}_{m_{(t)}}}-\frac{1}{|{h}_{(t)}|^2{\sigma}_{m_{(t)}}^2}\right]^{+}},
\end{align}

\noindent where $\delta^{\ast}$ is the optimal Lagrange multiplier, such that $\bar{P}=P$ in~(\ref{opt:ParallelChannelsOptProblem}).

Similarly, the optimal power allocation and the corresponding distortion can be found by using the \textit{water-filling reflected on a reciprocal mirror} interpretation. The optimal Lagrange multiplier $\delta^{\ast}$ depends on $p_{M_{(t)}}(m)$ and $p_{H_{(t)}}(h)$, which can be found explicitly by using the order statistics. In the following lemma, we give the $t$-th order statistics $p_{M_{(t)}}(m)$ and $p_{H_{(t)}}(h)$, for $t\in[1\text{:}N]$.

\begin{lem}
\label{Lemma2}
Let $F_M(m)$ and $F_H(h)$ denote the cumulative distribution functions of $p_M(m)$ and $p_H(h)$, respectively. Given $F_M(m)$, $p_M(m)$, $F_H(h)$, $p_H(h)$ and $N$, the $t$-th order statistics $p_{M_{(t)}}(m)$ and $p_{H_{(t)}}(h)$, $t\in[1\text{:}N]$, are found as$\colon$

%\begin{align}
%\label{eq:DensityFunctions}
%&p_{H_{(t)}}(h)=tp_H(h)\binom{N}{t}(F_H(h))^{t-1}(1-F_H(h))^{N-t},\\
%&p_{M_{(t)}}(m)=\sum\limits_{b=t}^{N}\binom{N}{b}\left[F_M(m)^b(1-F_M(m))^{N-b}\right.\nonumber \\
%&\qquad \left. {}\hspace{1cm}-F_M(m-1)^b(1-F_M(m-1))^{N-b}\right].
%\end{align}

\begin{align}
\label{eq:DensityFunctions}
&p_{H_{(t)}}(h)=tp_H(h)\binom{N}{t}(F_H(h))^{t-1}(1-F_H(h))^{N-t},\\
&p_{M_{(t)}}(m)=\sum\limits_{b=t}^{N}\binom{N}{b}\left[F_M(m)^b(1-F_M(m))^{N-b}-F_M(m-1)^b(1-F_M(m-1))^{N-b}\right].
\end{align}

\end{lem}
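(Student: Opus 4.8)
The plan is to derive both order statistics from first principles using the standard combinatorial argument for order statistics of $N$ i.i.d. samples, adapting it to the fact that one variable ($H$) is continuous while the other ($M$) is discrete. First I would establish the formula for $p_{H_{(t)}}(h)$. Since $H_1,\ldots,H_N$ are i.i.d. with density $p_H$ and CDF $F_H$, the event $\{H_{(t)} \in [h, h+\mathrm{d}h]\}$ occurs (to first order in $\mathrm{d}h$) exactly when one of the $N$ samples lands in $[h,h+\mathrm{d}h]$, exactly $t-1$ of the remaining $N-1$ fall below $h$, and the other $N-t$ fall above $h$. Counting the ways to choose which sample is the "middle" one and which $t-1$ of the rest are below gives the multinomial coefficient $N\binom{N-1}{t-1} = t\binom{N}{t}$, and multiplying by the probabilities $p_H(h)\,\mathrm{d}h$, $(F_H(h))^{t-1}$, and $(1-F_H(h))^{N-t}$ yields \eqref{eq:DensityFunctions}. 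This is the textbook result and requires no real obstacle beyond writing it cleanly.

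The second formula, for the discrete r.v. $M_{(t)}$, is the part that needs care, and I would obtain it via the CDF rather than the pmf directly. The key observation is that $\{M_{(t)} \le m\}$ is equivalent to the event that at least $t$ of the $N$ i.i.d. copies of $M$ take a value $\le m$. Since $\Pr[M \le m] = F_M(m)$, the number of copies falling in $\{1,\ldots,m\}$ is Binomial$(N, F_M(m))$, so
\[
F_{M_{(t)}}(m) = \Pr[M_{(t)} \le m] = \sum_{b=t}^{N} \binom{N}{b} F_M(m)^b (1-F_M(m))^{N-b}.
\]
Then $p_{M_{(t)}}(m) = F_{M_{(t)}}(m) - F_{M_{(t)}}(m-1)$, which is precisely the stated expression once the two sums are written out. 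I would remark that the convention $F_M(0)=0$ (so the $m=1$ case reduces correctly) is implicit, and that the summand telescopes the "at least $t$ below or at $m$" probability against "at least $t$ strictly below $m$", which is the natural discrete analogue of differentiating the continuous CDF.

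The main obstacle, such as it is, is purely expository: making the discrete case rigorous requires being explicit that ties among the $M_i$ are handled by the $\le$ ordering (consistent with the paper's earlier stipulation that ordering $\mathbf{m}$ in ascending index order corresponds to ascending variances), so that $M_{(t)} \le m$ genuinely means "$t$ or more of the samples are $\le m$" with no ambiguity. Once that is pinned down, both identities follow from elementary counting, and I would present the argument in two short paragraphs — one invoking the standard order-statistic density for the continuous channel gains, one giving the CDF-differencing argument for the discrete parameter indices — with a closing sentence noting that substituting these into \eqref{opt:ParallelChannelsOptProblem} makes the expectations, and hence $\delta^{\ast}$, computable.
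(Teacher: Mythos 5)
Your proof is correct and follows essentially the same route as the paper: the paper likewise writes $F_{H_{(t)}}(h)$ (and, by the same argument, $F_{M_{(t)}}(m)$) as the binomial tail probability that at least $t$ of the $N$ i.i.d.\ samples are $\le$ the argument, obtains $p_{M_{(t)}}(m)$ by differencing at $m$ and $m-1$, and obtains $p_{H_{(t)}}(h)$ by differentiating that tail in $h$, which telescopes to the same $t\binom{N}{t}p_H(h)F_H(h)^{t-1}(1-F_H(h))^{N-t}$ you get from the direct multinomial count. The only cosmetic difference is that you derive the continuous density via the infinitesimal counting argument rather than by differentiating the CDF; both are standard and there is no gap.
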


\begin{proof}
The proof is trivial and achieved through the definition of the cumulative distribution functions of $H_{(t)}$ and $M_{(t)}$.

\begin{align}
\label{eq:OrderDistribution1}
F_{H_{(t)}}(h)&=\mathrm{Pr}\{H_{(t)} \le h\}=\mathrm{Pr}\{\text{at least $t$ of $H$'s are}\le h\},\\
\label{eq:OrderDistribution2}
&=\sum\limits_{b=t}^{N}\frac{N!}{(N-b)!b!}F_H(h)^b(1-F_H(h))^{N-b}.
\end{align}
\noindent where~(\ref{eq:OrderDistribution1}) implies a binomial distribution with at least $t$ successes and can be formulated as~(\ref{eq:OrderDistribution2}). The $t$-th order statistics $p_{H_{(t)}}(h)$ is found by taking the derivative of~(\ref{eq:OrderDistribution2}) with respect to $h$. The same proof holds for $M_{(t)}$.
\end{proof}

\section{LT Strategies}
\label{s:AchievableLTStrategies}
In this section, we propose two LT strategies for general delay constraints $d\geq1$. The block diagram of the proposed LT strategies is illustrated in Fig.~\ref{fig:LTHM-BlockDiag}. Both strategies are composed of two main blocks, namely, storage and transmission blocks. There are two buffers of size $\bar{d}$ measurements, namely, the measurement buffer (MB) and the transmission buffer (TB). Here, we present these two schemes for an odd delay constraint, i.e., $d\in\{1,3,5,\ldots\}$, but they can be easily adapted to the case when $d$ is even. In the storage block, given a delay constraint of $d=2\bar{d}-1$ for $\bar{d}\in[1\text{:}\infty]$, the sensor collects a block of $\bar{d}$ consecutive measurements after $\bar{d}$ consecutive TSs, and stores them in the MB. The consecutive blocks of $\bar{d}$ measurements, taken over successive time intervals, are indexed by $\bar{k}=\{1,2,\ldots\}$. Then, the $\bar{k}$-th block consists of the measurements taken within TSs $[(1+(\bar{k}-1)\bar{d})\text{:}\bar{k}\bar{d}]$, i.e., $\mathbf{s}_{(1+(\bar{k}-1)\bar{d})}^{\bar{k}\bar{d}}$. When the MB gets full with the $\bar{d}$ measurements of the $\bar{k}$-th block, the sensor removes $\mathbf{s}_{(1+(\bar{k}-1)\bar{d})}^{\bar{k}\bar{d}}$ from the MB  and loads them into the TB. Then, for the next consecutive $\bar{d}$ TSs $[\bar{k}\bar{d}\text{:}((\bar{k}+1)\bar{d}-1)]$, the sensor accesses the channel and transmits a linear function of the measurements in the TB, i.e., $\mathbf{s}_{(1+(\bar{k}-1)\bar{d})}^{\bar{k}\bar{d}}$, over the channel states $\mathbf{h}_{\bar{k}\bar{d}}^{((\bar{k}+1)\bar{d}-1)}$ satisfying the delay constraint $d$. The specifics of these linear functions will be explained below.

\begin{figure}
\centering
\includegraphics[width=0.6\textwidth]{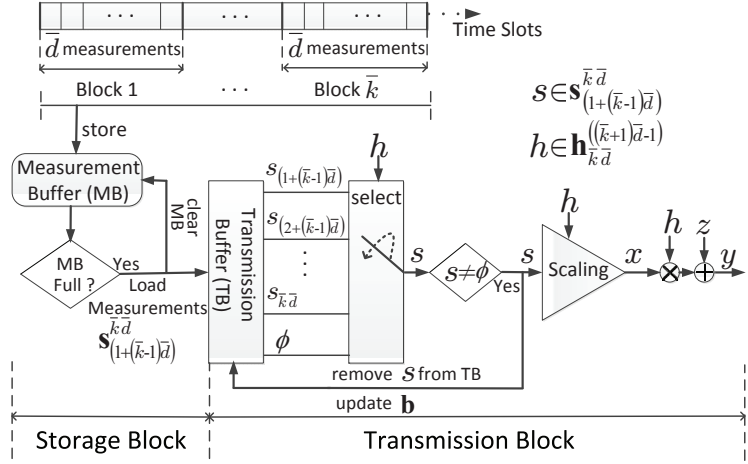}\caption{The block diagram illustration of the proposed LT strategies.}%
\label{fig:LTHM-BlockDiag}
\end{figure}

Note that, while the sensor transmits the measurements in the TB, it starts refilling the MB with new measurements $\mathbf{s}_{(\bar{k}\bar{d}+1)}^{(\bar{k}\bar{d}+\bar{d})}$. After $\bar{d}$ channel accesses within TSs $[\bar{k}\bar{d}\text{:}((\bar{k}+1)\bar{d}-1)]$, the MB gets full again and its new $\bar{d}$ measurements are transferred to the TB for transmission over the next $\bar{d}$ TSs.

The proposed transmission strategies consist of two sub-blocks, namely, the measurement selection and scaling sub-blocks. This division is motivated by the results of~\cite{OptimalLinearCoding} presented in Section~\ref{s:ParallelChannelsCase}, in which $N$ ordered measurements are mapped one-to-one to $N$ ordered channels, and each measurement is transmitted over its corresponding channel. Hence, we assume that, at each channel access, the sensor selects only one measurement and scales it to a channel input value. However, in this case, we cannot directly use the optimal LT scheme in~\cite{OptimalLinearCoding} and guarantee that the selected measurement and the channel state satisfy the optimal matching. This is because even though $\bar{d}$ measurements are available in the TB in advance, the states of the next $\bar{d}$ channels are not available to the transmitter as in the parallel channel model of~\cite{OptimalLinearCoding}; and instead, they become available over time. The two proposed LT strategies differ in the way they choose the measurement to be transmitted at each TS.

\subsection{Linear Transmission Scheme with Hard Matching (LTHM)}
\label{s:LTHM}
This transmission scheme has the following measurement selection criteria. Assume, without loss of generality, that parameters are ordered such that $\sigma_1^2>\sigma_2^2>\cdots>\sigma_J^2$. We divide the channel magnitude space ($\mathbb{R}^+$) into $J$ ordered channel intervals as, $\mathcal{H}_m=[{H_{m}^{\prime}},{H_{(m-1)}^{\prime}})$, where ${H_{m}^{\prime}}<{H_{(m-1)}^{\prime}}$ for $m\in[1\text{:}J]$. The boundary values are chosen as ${H_{0}^{\prime}}=\infty$, ${H_{J}^{\prime}}=0$ and ${H_{m}^{\prime}}=F_{H}^{-1}(1-\sum\limits_{j=1}^{m}p_M(j))$, for $m\in[1\text{:}(J-1)]$, where $F_{H}^{-1}(\cdot)$ denotes the inverse of the cumulative distribution function of the channel magnitude $|h|$, $F_{H}(|h|)$. Observe that according to this choice, the probability of the channel magnitude belonging to $\mathcal{H}_m$ is $\mathrm{Pr}\{|h|\in\mathcal{H}_m\}=p_{M}(m)$.\footnote{If channel fading follows a discrete distribution, we define sets of channel states as opposed to intervals. With abuse of notation, we denote the $m$th set as $\mathcal{H}_m$, for $m\in[1\text{:}J]$. Suppose that the discrete channel states are ordered as $|\hat{h}_1|>|\hat{h}_2|>|\hat{h}_3|>\cdots$. We allocate the discrete states into $J$ sets such that the probability of channel state falling into set $\mathcal{H}_m$ is $p_M(m)$. However, it may be possible that the channel states cannot be grouped to satisfy this equality exactly for all $m$. In that case we create virtual states to satisfy these equalities, as explained below.

Let $j$ be the minimum index for which $\sum_{i=1}^j p_H(|h|= \hat{h}_i) > p_M(1)$. Define $p_M^1 = p_M(1) - \sum_{i=1}^{j-1} p_H(|h|= \hat{h}_i)$. We define a new virtual channel state $\hat{h}_j^1$, whose gain is equivalent to $\hat{h}_j$. Whenever the real channel state is $\hat{h}_j$, we randomly assign the channel state to $\hat{h}_j^1$ with probability $p_M^1/p_M(j)$. We let $\mathcal{H}_1 = \{\hat{h}_1, \ldots, \hat{h}_{j-1}, \hat{h}_j^1\}$. We repeat the same process for $p_M(2)$, starting with channel state $\hat{h}_j$ whose probability is now $p_H(\hat{h}_j) - p_M^1$.}

The algorithmic description of LTHM is given in Algorithm~\ref{LTSchemes}. Let $\mathbf{b}=[b_{1}, b_{2},\ldots, b_{J}]$ be a $J$-length vector, where the $m$-th entry, $b_{m}\in[0\text{:}\bar{d}]$, denotes the number of measurements of parameter $m$ in the TB, for $m\in[1\text{:}J]$. At each channel access, if $|h|\in\mathcal{H}_m$ and $b_m\neq0$, then the sensor selects one measurement of the parameter type $m$ from the TB and feeds it to the scaling sub-block. If there are multiple measurements of the same parameter type $m$ in the TB, i.e., $b_m>1$, then the sensor selects one of them randomly. The selected measurement is removed from the TB and $\mathbf{b}$ is updated by reducing the $m$-th entry, $b_m$, by one. Thereby, each measurement is transmitted only once. On the other hand, if $|h|\in\mathcal{H}_m$ and $b_m=0$, no measurement is transmitted in that TS. Hence, LTHM considers a hard matching condition for selecting measurements, in which each parameter has a corresponding interval of channel states, and only measurements of that parameter can be transmitted over a channel state from that interval. Note that, since the channel state is known at the receiver, it also knows which type of measurement is transmitted at each TS.

For the scaling sub-block we use the power allocation strategy derived in Section~\ref{s:StrictDelayConstraint}. Thus, the selected measurement of the parameter type $m$ is transmitted at the current channel state $|h|\in\mathcal{H}_m$, for $m\in[1\text{:}J]$, by allocating power $P(h,m)$, leading to distortion $D(h,m)$$\colon$
%\vspace{-0.05cm}
\begin{align}
\label{eq:PowerAllocationLTHM}
\hspace{-0.2cm}  P(h,m)=
\begin{cases}
    \Big[\frac{\mu\sigma_m}{|h|}-\frac{1}{|h|^2}\Big]^+, &\text{if } \text{hard matching holds,}\\
    0,              & \text{otherwise.}
\end{cases}
\end{align}

\begin{align}
\label{eq:DistortionLTHM}
\hspace{-0.6cm} D(h,m)=
\begin{cases}
    \frac{\sigma_{m}^2}{|h|^2P(h,m)+1}, &\text{if } \text{hard matching holds,}\\
    \sigma_{m}^2,              & \text{otherwise,}
\end{cases}
\end{align}

\noindent where $\mu$ is chosen such that the average power constraint is satisfied.

After every transmission, the CC estimates the transmitted measurement $s$ by using the channel output $y$. It is noteworthy that after $\bar{d}$ channel accesses, we may have untransmitted measurements in the TB. TB is emptied anyway since these measurements have expired, and they are estimated with the maximum distortion $\sigma^2_m$. As we show next, the average number of untransmitted measurements decreases with the increasing delay constraint $d$. However, for a finite delay constraint the untransmitted measurements dominate the distortion even for a high average transmission power constraint. In order to combat this drawback, we propose an alternative LT scheme.

\begin{spacing}{1}
\begin{algorithm}
  \caption{LTHM and LTSM}
  \label{LTSchemes}
  \begin{algorithmic}[1]
   \Initialization{Load measurements of MB, $\mathbf{s}_{(1+(\bar{k}-1)\bar{d})}^{\bar{k}\bar{d}}$, into TB and update $\mathbf{b}$.}
   \For{$i$ = $\bar{k}\bar{d}$ to $(\bar{k}+1)\bar{d}-1$}\Comment{TSs for $\bar{d}$ channel accesses}
   \If{$|h_i|\in\mathcal{H}_m$ and $b_m\neq0$}\Comment{both for LTHM and LTSM}
   \Statex \text{Select one measurement of parameter $m$ from TB.} \Comment{measurement selection}
   \Statex \text{Transmit the measurement over $|h_i|$ with an allocated power of Eqn.~(\ref{eq:PowerAllocationLTHM}).}\Comment{scaling}
   \Statex\text{$b_m \gets b_m-1$}\Comment{update $\mathbf{b}$}
   \Else\Comment{only for LTSM}
   \Statex \text{Find $\varsigma$ by solving $\underset{b_\varsigma\neq0}{\text{min }}\left\lvert |h_i|-{h_{\varsigma}^{\prime}} \right\rvert.$}
   \Statex \text{Select one measurement of parameter $\varsigma$ from TB.}\Comment{measurement selection}
   \Statex \text{Transmit the measurement over $|h_i|$ with an allocated power of Eqn.~(\ref{eq:PowerAllocationLTHM}).}\Comment{scaling}
   \Statex\text{$b_m \gets b_m-1$}\Comment{update $\mathbf{b}$}
   \EndIf
   \EndFor
   \Statex \text{$\bar{k}\gets\bar{k}+1$ and go to \textbf{Initialization}}
   \end{algorithmic}
\end{algorithm}
\end{spacing}

\subsection{Linear Transmission Scheme with Soft Matching (LTSM)}
\label{s:LTSM}
The algorithmic description of LTSM is given in Algorithm~\ref{LTSchemes}. The LTSM retains the hard matching condition of LTHM, i.e., at each channel access, if $|h|\in\mathcal{H}_m$ and $b_m\neq0$ for $m\in[1\text{:}J]$, LTSM selects one measurement of the parameter type $m$ from the TB. Hence, LTSM also gives the highest selection priority to the measurement of the parameter type that satisfies the hard matching condition with the channel state. However, if $|h|\in\mathcal{H}_m$ and $b_m=0$, LTSM does not waste the channel state; and instead, selects one measurement based on the following measurement selection criteria$\colon$

Assume that each interval $\mathcal{H}_m$ is further divided into two equally probable intervals by the boundary value ${h_{m}^{\prime}}=F_{H}^{-1}\left(\frac{F_{H}(H_{(m-1)}^{\prime})+F_{H}(H_{m}^{\prime})}{2}\right)$, for $\forall m\in[1\text{:}J]$\footnote{If the channel follows a discrete fading distribution, we find $h_{m}^{\prime}$ by taking the mean value of all elements of channel set $\mathcal{H}_m$.}. If $|h|\in\mathcal{H}_m$ and $b_m=0$, then LTSM selects one measurement of parameter $\varsigma$, which is the parameter that minimizes the following distance metric$\colon$

\begin{equation}
\begin{aligned}
\label{eq:NotPerfectMatching}
\underset{b_\varsigma\neq0}{\text{min }}\left\lvert |h|-{h_{\varsigma}^{\prime}} \right\rvert.
\end{aligned}
\end{equation}

When the hard matching condition is not satisfied, the LTSM considers a soft matching condition for selecting measurements; that is, among all parameter types of the measurements in the TB, it selects a measurement of the parameter whose corresponding interval of channel states has the value ${h_{\varsigma}^{\prime}}$ closest to the channel state magnitude $|h|$. If two distinct $\varsigma$ values satisfy the solution of Eqn.~(\ref{eq:NotPerfectMatching}), then LTSM chooses the smallest value of $\varsigma$.  LTSM allocates the power as in Eqn.~(\ref{eq:PowerAllocationLTHM}), and transmits the selected measurement, leading to distortion in Eqn.~(\ref{eq:DistortionLTHM}). Note that the optimal Lagrange multiplier $\mu$ is chosen such that the average power constraint is satisfied. At the end of $\bar{d}$ channel accesses, the sensor will have transmitted all the measurements in the TB, albeit some might have been allocated zero power as a result of the water-filling algorithm.

\section{Distortion Lower Bounds}
\label{s:Bounds}
We characterize two lower bounds on the MSE distortion, namely, the TLB and the LLB. While the TLB is the theoretical performance bound derived without any delay or complexity constraints on the transmission, the LLB is a performance lower bound only for LT strategies. We also prove that the proposed LT strategies meet the TLB under infinite delay and certain matching conditions between the channel states and parameter variances.

\subsection{Theoretical Lower Bound (TLB)}
\label{s:TheoreticalLowerBound}
Shannon's source-channel separation theorem states that the optimal end-to-end distortion is achieved by concatenating the optimal source and channel codes when there is no delay or complexity constraints, and the source and channel distributions are ergodic~\cite{ElementsInformationTheory}. When we remove the delay and linear encoding constraints in our system model, then the sensor can transmit to the CC at the ergodic capacity, $\bar{C}_{e}$, of the underlying fading channel, while the minimum distortion, $\bar{D}_{e}$, is found by evaluating the distortion-rate function for a composite Gaussian source model at the ergodic capacity.

Since the channel state is known by both the transmitter and receiver, the ergodic capacity, in terms of the optimal power allocation scheme $P_e^*(h)$, is given by$\colon$
\vspace{-0.05cm}
\begin{align}
 \label{eq:ChannelCapacity}
 \bar{C}_e\triangleq\mathrm{E}_{H}\left[\frac{1}{2}\log\left(1+|h|^2P_e^*(h)\right)\right],
\end{align}

\noindent where $P_e^*(h)$ is found by the water-filling algorithm as $P_e^*(h)= [\alpha^*- 1/|h|^2]^+$, where $\alpha^*$ is chosen to satisfy $\bar{P}_e\triangleq\mathrm{E}_{H} \left[P_e^*(h) \right]=P$.

%{\color{red}From Eqn. (6.1.21) of~\cite{RDTBerger}, the rate-distortion of a composite source when the encoder and decoder know the indices of requested parameters, is given by$\colon$
%\begin{equation}
%R(D)=\sum_{m=1}^Jp_M(m)R_m(D_m)
%\end{equation}
%\noindent where $R_m(\cdot)$ denotes the rate-distortion function of $m$-th component $\{S_i,P(\cdot|m)\}$, $D_m$ satisfy $E_M[D_m]=D$ and $R_m(D_m)=R_n(D_n)$, $\forall m\neq n$.}

From Eqn. (6.1.21) of~\cite{RDTBerger}, the distortion-rate function of a composite Gaussian source with $m$ components, $\mathcal{N}(0,\sigma_{m}^2)$, each of which is observed with probability $p_M(m)$ for $m\in[1\text{:}J]$, is defined as$\colon$
\vspace{-0.05cm}
\begin{align}
\label{eq:DistortionRate}
\bar{D}_e\triangleq\mathrm{E}_{M}\left[\sigma_{m}^22^{-2R_e^*(\sigma_{m})}\right],
\end{align}
\noindent where the optimal rate allocated to source $m$, $R_e^*(\sigma_{m})$, and the corresponding distortion, $D_e^*(\sigma_{m})$, are given by$\colon$
\vspace{-0.1cm}
\begin{align}
\label{eq:ShannonRateAllocation}
R_e^*(\sigma_{m})&=\frac{1}{2}\left[\log\left(\frac{\sigma_{m}^2}{\beta^*}\right)\right]^+,
\end{align}
\begin{align}
\label{eq:ShannonCorrespondingDistortion}
D_e^*(\sigma_{m})&=\min\left(\beta^*,\sigma_{m}^2\right),
\end{align}
\noindent where $\beta^*$ is chosen such that $\bar{R}_{e}\triangleq\mathrm{E}_{M}\left[R_e^*(\sigma_{m})\right]=\bar{C}_e$.

Hence, the optimal distortion is found as $\bar{D}_e=\mathrm{E}_{M}\left[D_e^*(\sigma_{m})\right]$, which is the TLB on the achievable MSE distortion by any transmission strategy. Note that we have removed both the delay constraint and the linearity requirement on the encoder and decoder.

\subsubsection{Asymptotic Optimality of LT}
\label{s:AsymptoticOptimalityOfLT}
In general, the TLB cannot be achieved by LT strategies even if the delay constraint is removed. However, it can be shown that LTHM and LTSM meet this lower bound when the delay constraint is removed under certain matching conditions between the channel states and the parameter variances.

Assume that the channel follows a discrete fading distribution, where the channel state $h$ can take one of the $J$ values $\hat{h}_{m}$ with probability $p_{H}(\hat{h}_{m})$ for $m\in[1\text{:}J]$. The discrete values are ordered as $|\hat{h}_1|>|\hat{h}_2|>\cdots>|\hat{h}_J|$. The next theorem states the necessary conditions in this discrete channel model under which LTHM and LTSM achieve the optimal distortion performance when the delay constraint is removed.

\begin{thm}
\label{Theorem1}
For the discrete AWGN fading channel model, if the parameter variances and the discrete channel states satisfy
$\frac{\sigma_{1}}{|\hat{h}_{1}|}=\cdots=\frac{\sigma_{J}}{|\hat{h}_{J}|}$, and $p_{M}(m)=p_{H}(\hat{h}_{m})$, for
$\forall m\in[1\text{:}J]$, then the TLB is achieved by LTHM and LTSM when the delay constraint is removed, i.e., $d\to\infty$.
\end{thm}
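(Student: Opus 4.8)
The plan is to show that under the stated matching conditions, the hard matching condition of LTHM (and hence LTSM) is satisfied at \emph{every} channel access once the delay constraint is removed, so that every measurement is transmitted over a channel state from its matched interval, with no untransmitted measurements contributing the penalty distortion $\sigma_m^2$. First I would specialize the construction of Section~\ref{s:LTHM} to the discrete channel: since $p_M(m)=p_H(\hat h_m)$ and the channel takes exactly $J$ values, the channel "intervals" $\mathcal{H}_m$ degenerate to singletons $\mathcal{H}_m=\{\hat h_m\}$, with $\Pr\{|h|\in\mathcal{H}_m\}=p_H(\hat h_m)=p_M(m)$, and the ordering $|\hat h_1|>\cdots>|\hat h_J|$ is compatible with $\sigma_1^2>\cdots>\sigma_J^2$. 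So each channel state is matched to exactly one parameter type.

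Next I would argue that, as $d=2\bar d-1\to\infty$ (so $\bar d\to\infty$), the fraction of TSs in which the hard matching condition fails goes to zero. Over a transmission window of $\bar d$ channel accesses, the number of occurrences of channel state $\hat h_m$ concentrates around $\bar d\, p_H(\hat h_m)=\bar d\, p_M(m)$ by the law of large numbers, and the number of measurements of parameter $m$ loaded into the TB concentrates around $\bar d\, p_M(m)$ as well; hence the supply $b_m$ of type-$m$ measurements asymptotically matches the demand (the number of times $\hat h_m$ is seen). A cleaner route, which I would prefer, is to bound the \emph{average} distortion directly: the contribution of mismatched/untransmitted measurements to $\bar D$ is at most $\sigma_1^2$ times the expected fraction of channel accesses at which either $|h|\in\mathcal{H}_m$ but $b_m=0$, or a measurement is left untransmitted at the end of the window; a renewal/queueing argument over the TB shows this fraction is $O(1/\bar d)$ (the imbalance between i.i.d.\ arrivals and i.i.d.\ services per block scales like $\sqrt{\bar d}$, i.e.\ a $O(1/\sqrt{\bar d})$ fraction), and hence vanishes as $\bar d\to\infty$.

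Having established that in the limit every type-$m$ measurement is transmitted over state $\hat h_m$ with power $P(\hat h_m,m)=[\mu\sigma_m/|\hat h_m|-1/|\hat h_m|^2]^+$, I would then show the resulting distortion equals the TLB. Using the condition $\sigma_1/|\hat h_1|=\cdots=\sigma_J/|\hat h_J|\triangleq c$, the power allocated to type $m$ is $[\mu c-c/\sigma_m]^+=c[\mu-1/\sigma_m]^+$, and the corresponding distortion from Eqn.~(\ref{eq:DistortionLTHM}) becomes $D(\hat h_m,m)=\sigma_m^2/(|\hat h_m|^2 P(\hat h_m,m)+1)$. Substituting $|\hat h_m|^2=\sigma_m^2/c^2$ gives $D(\hat h_m,m)=\sigma_m^2/\big((\sigma_m^2/c^2)c[\mu-1/\sigma_m]^++1\big)$; I would simplify this and compare it term-by-term with the TLB expression $D_e^*(\sigma_m)=\min(\beta^*,\sigma_m^2)$ from Eqn.~(\ref{eq:ShannonCorrespondingDistortion}). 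The key is that the ergodic capacity $\bar C_e$ in Eqn.~(\ref{eq:ChannelCapacity}) with water level $\alpha^*$ and the distortion–rate allocation with water level $\beta^*$ must be shown to be consistent with this LT allocation: I would verify that choosing the multiplier $\mu$ so that $\bar P=P$ forces exactly the matching $P(\hat h_m,m)=P_e^*(\hat h_m)$ for the on-states, whence $\tfrac12\log(1+|\hat h_m|^2 P_e^*(\hat h_m))=R_e^*(\sigma_m)$ and therefore $D(\hat h_m,m)=\sigma_m^2 2^{-2R_e^*(\sigma_m)}=D_e^*(\sigma_m)$, so $\bar D=\mathrm{E}_M[D_e^*(\sigma_m)]=\bar D_e$.

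The main obstacle is the middle step: rigorously controlling the buffer dynamics to show the mismatch fraction vanishes as $\bar d\to\infty$. One must be careful that the random assignment of measurements from MB to TB, combined with i.i.d.\ channel states within a window, does not create a persistent backlog in any $b_m$; a clean way around this is to invoke the law of large numbers on both the per-block arrival counts and the per-block channel-state counts, show both converge a.s.\ to $p_M(m)\bar d$, and conclude that the per-block mismatch is $o(\bar d)$, so its normalized contribution to $\bar D$ vanishes. Everything else — the degeneration of intervals to singletons, and the algebraic identification of the LT power allocation with the water-filling capacity-achieving allocation — is a direct consequence of the matching conditions and the formulas already derived in Sections~\ref{s:StrictDelayConstraint} and~\ref{s:TheoreticalLowerBound}.
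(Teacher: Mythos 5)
Your proposal is correct and follows essentially the same route as the paper's proof: a per-block concentration argument (the paper uses Chebyshev's inequality on the Binomial counts of type-$m$ measurements and of occurrences of state $\hat h_m$, with $\epsilon=\bar d^{1/3}$) to show the normalized expected mismatch $\frac{1}{\bar d}\mathrm{E}\big[[\bar Z_m-\hat Z_m]^+\big]\to 0$, followed by identifying the LT water level with the capacity and rate-allocation water levels ($\mu^{\ast}q=\alpha^{\ast}$, $\alpha^{\ast}=q^2/\beta^{\ast}$) to match the TLB term by term. The only caveat is a small algebra slip in your intermediate power expression ($1/|\hat h_m|^2=c^2/\sigma_m^2$, not $c/\sigma_m$), which does not affect the argument.
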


\begin{proof}
The proof can be found in Appendix A.
\end{proof}

\subsection{The Linear Transmission Lower Bound (LLB)} \label{s:LinearLowerBound}

We next derive a lower bound on the achievable MSE distortion as a function of the delay and power constraints for any LT strategy. In order to derive this lower bound, we relax the assumption on the causal knowledge of the measurements and channel states, and instead assume that the sensor has the offline (non-causal) knowledge of a certain number of future measurements and channel states.  Accordingly, we assume that at any TS the sensor non-causally knows the length-$\bar{u}$ measurement vector, i.e., $\mathbf{s}=[s_{1},\ldots,s_{\bar{u}}]$, taken over the next $\bar{u}$ TSs. Observe that, for a delay constraint $d$, each measurement of $\mathbf{s}$ can only be transmitted over the following $d$ channel states observed after it is taken, thus the transmission of the vector $\mathbf{s}$ spans the following $\bar{c}=(d+\bar{u}-1)$ channel states observed after the first measurement $s_{1}$ is taken. We further assume that the sensor non-causally knows the length-$\bar{c}$ channel vector $\mathbf{h}=\left[{h}_{1},\ldots,{h}_{\bar{c}}\right]$. Henceforth, the problem is reduced to optimally transmitting $\bar{u}$ measurements over $\bar{c}$ parallel channels, which is attained by using the optimal LT scheme presented in Section~\ref{s:ParallelChannelsCase}. Accordingly, we first reorder $\mathbf{s}$ to get $\mathbf{\bar{s}}=[s_{(1)},\ldots,s_{(\bar{u})}]$, where the variances of the ordered measurements satisfy $\sigma_{m_{(1)}}^{2}\leq\sigma_{m_{(2)}}^{2}\leq\cdots\leq \sigma_{m_{(\bar{u})}}^{2}$, and reorder $\mathbf{h}$ to get $\mathbf{\bar{h}}=\left[{h}_{(1)},\ldots,{h}_{(\bar{c})}\right]$, such that the ordered fading states satisfy $|{h}_{(1)}|\leq|{h}_{(2)}|\leq\ldots\leq|{h}_{(\bar{c})}|$. Then, the $\bar{c}\times \bar{u}$ optimal linear encoding matrix $\mathbf{F}^{\ast}_{\mathbf{h,m}}$ consists of a $\bar{u}\times \bar{u}$ size diagonal partition with entries $\left[f_{(1)}({h}_{(1+\bar{e})},{m}_{(1)}),\ldots,f_{(\bar{u})}({h}_{(\bar{u}+\bar{e})},{m}_{(\bar{u})})\right]$, and a $\bar{e}\times \bar{u}$ size partition with zero entries, where $\bar{e}=\bar{c}-\bar{u}$, and it maps $\bar{u}$ ordered measurements to the $\bar{u}$ channels with the largest gains. The optimal entries of $\mathbf{F}^{\ast}_{\mathbf{h,m}}$ are found as the solution of the following convex optimization problem with the optimal objective function $\bar{D}^{\ast}(d,\bar{u},P)$$\colon$

\small
\begin{equation}
\begin{aligned}
\label{opt:LLBOptProblem}
\underset{ f_{(t)}}{\text{ min } }& \mathrm{E}_{M_{(t)},H_{(t+\bar{e})}}\Bigg[\frac{1}{\bar{u}}\sum\limits_{t=1}^{\bar{u}}\frac{\sigma_{m_{(t)}}^2}{|{h}_{(t+\bar{e})}|^2{f}_{(t)}({h}_{(t+\bar{e})},{m}_{(t)})^2\sigma_{m_{(t)}}^2+1}\Bigg]\\
\text{s.t. }
&\bar{P}\triangleq\mathrm{E}_{M_{(t)},H_{(t+\bar{e})}}\Bigg[\frac{1}{\bar{u}}\sum\limits_{t=1}^{\bar{u}}{f}_{(t)}({h}_{(t+\bar{e})},{m}_{(t)})^2\sigma_{m_{(t)}}^2\Bigg] \le P,
\end{aligned}
\end{equation}
\normalsize

\noindent where the expectation is taken over $M_{(t)}$ and $H_{(t+\bar{e})}$ for $t\in[1\text{:}\bar{u}]$. The $t$-th and $(t+\bar{e})$-th order statistics $p_{M_{(t)}}(m)$ and $p_{H_{(t+\bar{e})}}(h)$, are given by Lemma~\ref{Lemma2}. The optimal linear encoding matrix with diagonal entries is found as$\colon$

\begin{align}
\label{OptimalEncoderLB}
{f}_{(t)}^{\ast}(h_{(t+\bar{e})},m_{(t)})=\sqrt{\left[\frac{\zeta^{\ast}}{|{h}_{(t+\bar{e})}|{\sigma}_{m_{(t)}}}-\frac{1}{|{h}_{(t+\bar{e})}|^2{\sigma}_{m_{(t)}}^2}\right]^{+}},
\end{align}

\noindent where $\zeta^{\ast}$ is the optimal Lagrange multiplier, such that $\bar{P}=P$ in~(\ref{opt:LLBOptProblem}).

Assuming non-causal knowledge of $\bar{u}$ measurements and $\bar{c}$ channel states under the delay constraint $d$ and the average power constraint $P$, we obtain the optimal distortion $\bar{D}^{\ast}(d,\bar{u},P)$ for any LT strategy. Then, the LLB is derived by finding the $\bar{u}$ value, which maximizes $\bar{D}^{\ast}(d,\bar{u},P)$$\colon$

\begin{equation}
\begin{aligned}
\label{opt:Opt3}
\bar{D}_l(d,P)&\triangleq\underset{\bar{u}}{\text{max } }\bar{D}^{\ast}(d,\bar{u},P).
\end{aligned}
\end{equation}

Note that we have relaxed the constraint for the causal knowledge of measurements and channel states both at the encoder and decoder. The numerical comparisons of the LLB with the proposed schemes will be presented in Section~\ref{s:NumericalResults}.

\section{No CSI at the Encoder}
\label{s:NoEncoderCSI}
In this section, we assume that the CSI is known only at the decoder. We derive the optimal LT strategy under a strict delay constraint $(d=1)$, as well as the TLB on the achievable MSE distortion. Additionally, for the multiple measurements-parallel channels scenario studied in Section~\ref{s:ParallelChannelsCase}, we show that if the CSI is available only at the receiver, any LT scheme that is limited to a one-to-one linear mapping from the measurements to the channel input is suboptimal in general. The optimal LT strategy is elusive and it will be a non-trivial function of the source variances and the channel distribution.

\subsection{Strict Delay Constraint}
\label{s:s:StrictDelayConstraintNoCSI}
Under a strict delay constraint, the most recent measurement is transmitted at each TS. By applying Lemma~\ref{Lemma1} to this scenario, we can similarly show that there is no loss of optimality by considering time-invariant encoding functions, i.e., $f_i(m)=f(m)$, $\forall i$. Hence, the encoding function $f(m)$ is a scalar and time-invariant. The decoding function $g(h,m)$ that minimizes the MSE is the linear MMSE estimator~\cite{NetworkInformationTheory}, and is also a scalar and time-invariant. Then, the MSE distortion, $\bar{D}=\mathrm{E}_{M,H,S,Z}[|S-\hat
 {S}|^2]$, and the average power, $\bar{P}=\mathrm{E}_{M,S}[|X|^2]$, can be written explicitly as$\colon$
\vspace{-0.1cm}
\begin{align}
\label{eq:NOCSI_Distortion}
&\bar{D}=\sum\limits_{m=1}^{J}p_M(m)\int_\mathbb{R}\frac{\sigma_{m}^2}{|h|^2f(m)^2\sigma_{m}^2+1}p_H(h)\mathrm{d}h,\\
\label{eq:NOCSI_Power}
&\bar{P}=\sum\limits_{m=1}^{J}p_M(m) f(m)^2\sigma_{m}^2.
\end{align}

\noindent where $P(m)\triangleq f(m)^2\sigma_{m}^2$. The optimal linear encoding function, $f^{\ast}(m)$, is found as the solution to the convex optimization problem $\bar{D}^{\ast}\triangleq\underset{f}{\text{min}}\,\bar{D}$, subject to the average power constraint $\bar{P}\leq P$. From the KKT conditions~\cite{ConvexOptimization}, we have$\colon$

\begin{align}
f^{\ast}(m)=\sqrt{\frac{\left[\Psi^{-1}(\frac{\lambda^{\ast}}{\sigma_{m}^2})\right]^{+}}{\sigma_{m}^2}},
\end{align}

\noindent where $\Psi^{-1}:\mathbb{R}\rightarrow\mathbb{R}$ is the inverse of the function $\Psi:\mathbb{R}\rightarrow\mathbb{R}$, that is defined as, $\Psi(P(m))\triangleq\int_\mathbb{R}\frac{|h|^2}{(|h|^2P(m)+1)^2}p_H(h)\mathrm{d}h$. The optimal Lagrange multiplier $\lambda^{\ast}$ is chosen such that $\bar{P}=P$ in~(\ref{eq:NOCSI_Power}).

\subsection{Multiple Measurements and Parallel Channels}
\label{s:ParallelChannelsCaseNoCSI}
Next we consider the multiple measurements-parallel channels scenario studied in Section~\ref{s:ParallelChannelsCase}, under the strict delay constraint and the assumption that the CSI is known only at the decoder, and $J>1$. In such a scenario, the optimal LT scheme of~\cite{OptimalLinearCoding}, in which the ordered measurements are mapped one-to-one to ordered channel states, cannot be used directly. This is because, even though the encoder knows the $N$ measurements, it does not know any of the channel states, and hence; cannot order them. For the special case where $N$ measurements are observed from a single Gaussian source $(J=1)$, in~\cite{LinearJointSourceChannelCoding} the authors show that the optimal performance is achieved by transmitting one measurement over each channel. When $J=1$, since $N$ measurements all have the same variance, all orderings are equivalent, and the optimal LT performance is achieved by an LT scheme that uses only a one-to-one mapping between measurements and channels. However, this is not the case in general when $J>1$. Since $N$ measurements follow a composite Gaussian source model, the encoder can have measurements with different variances; and hence, we can exploit the diversity of the fading channel by transmitting a single measurement over multiple channels, instead of transmitting each measurement only once. Depending on the source variances, the former may surpass the best LT performance achieved by using only a one-to-one linear mapping. This is shown in the following lemma by considering a particular example.

\begin{lem}
\label{Lemma3}
Consider the LT of $N$ measurements of a composite Gaussian source with $J>1$ components over $N$ parallel AWGN fading channels. If the CSI is known only by the decoder, then the LT scheme that uses a one-to-one linear mapping between measurements and channels is suboptimal in general.
\end{lem}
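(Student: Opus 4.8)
The plan is to exhibit a concrete counterexample with $N=2$ parallel channels and $J=2$ source components, and to show that a scheme which transmits a single measurement over both channels strictly outperforms any scheme constrained to a one-to-one mapping. First I would fix the setup: let the two channel gains $H_1, H_2$ be i.i.d., each taking a small value $|h|=\epsilon$ or a large value $|h|=1$ with equal probability, say, and consider a composite source in which one of the two measurements has a tiny variance (essentially negligible, $\sigma_{\min}^2 \to 0$) while the other has variance $\sigma^2$. The key intuition to formalize is: since the encoder does not know the channel realizations, a one-to-one mapping must commit the high-variance measurement to a fixed channel index, and with probability $1/2$ that channel turns out to be the weak one, incurring large distortion; whereas a scheme that puts (a scaled copy of) the high-variance measurement on \emph{both} channels and effectively discards the near-deterministic one lets the decoder combine the two observations and always benefit from at least one good channel.

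The key steps, in order, are: (i) write down the distortion of the best one-to-one scheme. By symmetry and by the fact that the low-variance measurement contributes negligibly, the power budget $P$ is essentially all spent on the high-variance measurement transmitted over one fixed channel; its MMSE distortion, averaged over the two equiprobable channel states, is $\tfrac12 D(\epsilon,\sigma) + \tfrac12 D(1,\sigma)$ with $D(h,\sigma) = \sigma^2/(|h|^2 P' + 1)$ for the allocated power $P'$, and as $\epsilon\to 0$ the first term tends to $\sigma^2$, so the one-to-one distortion is bounded below by roughly $\sigma^2/2$ (minus a vanishing correction, and after accounting for the negligible second measurement's trivial contribution). (ii) Construct an alternative non-one-to-one LT scheme: set $x_1 = a s$, $x_2 = a s$ where $s$ is the high-variance measurement and $a$ is chosen to meet the average power constraint (with the low-variance measurement sent at zero power, or with a vanishing share). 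The decoder observes $y_1 = h_1 a s + z_1$, $y_2 = h_2 a s + z_2$ and forms the linear MMSE estimate of $s$ from $(y_1,y_2)$, whose distortion is $\sigma^2/\big(1 + a^2\sigma^2(|h_1|^2 + |h_2|^2)\big)$. (iii) Average this over the channel distribution: with probability $3/4$ at least one channel equals $1$, giving distortion at most $\sigma^2/(1 + a^2\sigma^2)$, and with probability $1/4$ both are $\epsilon$, giving distortion close to $\sigma^2$. Choosing $\epsilon$ small enough and $a$ to satisfy $2a^2\sigma^2 \le NP$ (roughly $a^2\sigma^2 = P$), the average distortion of this scheme is approximately $\tfrac34\cdot \sigma^2/(1+P) + \tfrac14\sigma^2$, which for any fixed $P>0$ is strictly less than the $\approx \sigma^2/2$ floor of the one-to-one schemes once $\epsilon$ is sufficiently small. (iv) Conclude that the one-to-one-constrained optimum is strictly larger than the distortion achieved by an admissible LT scheme, hence suboptimal.

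The main obstacle I anticipate is step (i): rigorously establishing a \emph{lower bound} on the distortion of \emph{every} one-to-one scheme, not just the symmetric one. A one-to-one mapping in this setting is a (possibly randomized) assignment of the two measurements to the two channel inputs together with scalar gains; since the encoder has no CSI, one must argue that whatever deterministic or randomized assignment is used, with probability at least $1/2$ the high-variance measurement lands on a weak channel, and the received SNR there is at most $\epsilon^2$ times the per-symbol power, which is bounded by $NP$; feeding this into the MMSE formula and taking $\epsilon\to0$ forces the conditional distortion to $\sigma^2$. Care is needed to handle the possibility of allocating unequal power across the two channels and across the (random) measurement requests, and to confirm that allocating any non-negligible power to the near-deterministic measurement only hurts; a clean way is to note that the one-to-one distortion, as a function of the power split, is lower bounded by its value under the best split, and that best split still cannot rescue the event that the high-variance measurement meets the $\epsilon$-channel. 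Once this lower bound is in hand, the comparison with the explicit two-copy scheme is a routine inequality, completing the proof.
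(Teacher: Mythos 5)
Your proposal is correct and follows essentially the same route as the paper: a two-source, two-channel counterexample in which repeating the high-variance measurement over both channels (while discarding the degenerate one) strictly beats every one-to-one (diagonal) mapping under decoder-only CSI. The paper sidesteps the obstacle you anticipate in step (i) by taking the degenerate case exactly, $\sigma_2^2=0$ and $\hat{h}_2=0$, so that whenever the high-variance measurement lands on the weak channel its conditional distortion is $\sigma_1^2$ regardless of the power split, which makes the lower bound over all one-to-one schemes (and all request configurations) immediate without any $\epsilon$-limiting argument.
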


\begin{proof}
The proof can be found in Appendix B.
\end{proof}

\subsection{The Theoretical Lower Bound (TLB)}
\label{s:TheoreticalLowerBoundNoCSI}
Similarly to Section~\ref{s:TheoreticalLowerBound}, we derive the TLB on the achievable MSE distortion by using Shannon's source-channel separation theorem. If the CSI is available only at the decoder and the average power constraint is $P$, then the ergodic capacity is given by$\colon$

\begin{align}
 \label{eq:ChannelCapacityNoCSI}
 \bar{C}_e\triangleq\mathrm{E}_{H}\left[\frac{1}{2}\log\left(1+|h|^2P\right)\right].
\end{align}

The distortion-rate function of a composite Gaussian source is defined as in Eqn.~(\ref{eq:DistortionRate}) of Section~\ref{s:TheoreticalLowerBound}, which leads to the optimal rate allocated to source $m$, $R_e^*(\sigma_{m})$, as in Eqn.~(\ref{eq:ShannonRateAllocation}) and the corresponding distortion, $D_e^*(\sigma_{m})$, as in Eqn.~(\ref{eq:ShannonCorrespondingDistortion}), respectively. The Lagrangian multiplier $\beta^*$ for this case is chosen such that $\mathrm{E}_{M}\left[R_e^*(\sigma_{m})\right]$ is equal to the ergodic capacity $\bar{C}_e$ in~(\ref{eq:ChannelCapacityNoCSI}). Then the TLB on the achievable MSE distortion by any strategy when the encoder does not have the CSI is given by $\bar{D}_e=\mathrm{E}_{M}\left[D_e^*(\sigma_{m})\right]$.

\section{Numerical Results and Observations}
\label{s:NumericalResults}
Here we provide numerical results to compare the performances of LTHM and LTSM with the lower bounds, and to analyze the impact of the delay and power constraints on the performance. In our simulations, we consider $J=4$ Gaussian parameters with variances $\{10, 5, 1, 0.5\}$, which are requested with probabilities $\{0.1, 0.3, 0.4, 0.2\}$, respectively. For a continuous fading channel, we consider Rayleigh distribution with a scale parameter $\omega=3$, and for a discrete fading channel, we consider four states $\{\sqrt{10},\sqrt{5},1,\sqrt{0.5}\}$, which are observed with probabilities $\{0.1, 0.3, 0.4, 0.2\}$, respectively.

\begin{figure}
\centering
\includegraphics[width=0.57\textwidth]{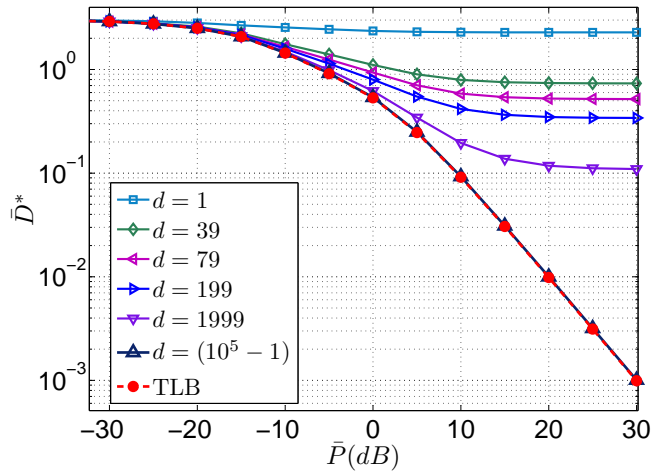}
\caption{Achievable MSE distortion with LTHM with respect to average power for different delay constraints in the discrete fading channel model.}
\label{fig:LTHM_disc}
\end{figure}

We illustrate the achievable MSE distortion versus average power under various delay constraints with LTHM in the discrete channel setting in Fig.~\ref{fig:LTHM_disc}. We observe that the MSE distortion diminishes as the delay constraint is relaxed. This is because a relaxed delay constraint provides a larger number of measurements in the TB; and hence, more flexibility for the sensor in selecting the appropriate measurement for each TS. We note that this statement does not hold when $J=1$, in which case increasing the block length does not provide any improvement~\cite{OptimalLinearCoding}. As it can be seen in Fig.~\ref{fig:LTHM_disc}, the MSE distortion converges to a fixed value as the average power value increases. This is due to the additional distortion brought in by the untransmitted measurements in the TB. The average number of untransmitted measurements and their effect on the MSE distortion decreases as the delay constraint is relaxed, since having a larger number of measurements in the TB increases the probability of finding a measurement that satisfies the hard matching condition. In particular, when the delay constraint is removed, as seen in Fig.~\ref{fig:LTHM_disc}, LTHM achieves the TLB, and becomes the optimal LT scheme, since the source-channel matching conditions in Theorem~\ref{Theorem1} are satisfied for the setup considered here.

In Fig.~\ref{fig:LTSM_cont}, we illustrate the achievable MSE distortion with LTSM with respect to average power under various delay constraints in the continuous channel model. Similarly to LTHM, the MSE distortion diminishes as the delay constraint increases. On the other hand, as opposed to LTHM, the MSE distortion achieved by LTSM decreases monotonically with the average power as illustrated in Fig.~\ref{fig:LTSM_cont}. This is because the performance of LTSM does not suffer from a fixed distortion component due to the untransmitted measurements. In addition, LTSM also approaches the TLB as the delay constraint is relaxed. Although we do not expect the LTSM to meet the TLB in this setting since the matching conditions of Theorem~\ref{Theorem1} do not hold, we observe in Fig.~\ref{fig:LTSM_cont} that it is very close to the TLB.

\begin{figure}
\centering
\includegraphics[width=0.57\textwidth]{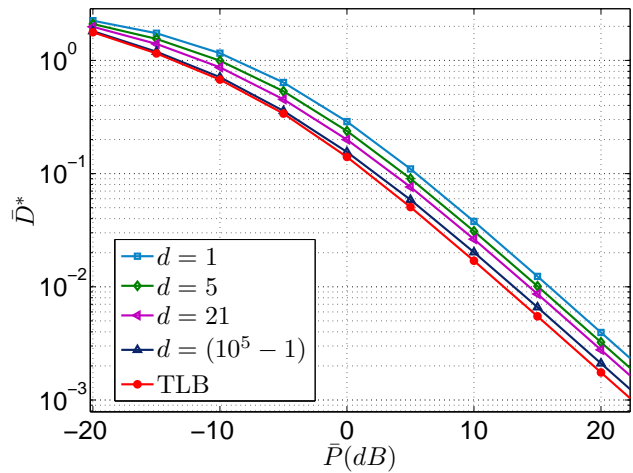}
\caption{Achievable MSE distortion with LTSM with respect to average power for various delay constraints in the continuous fading channel model.}
\label{fig:LTSM_cont}
\end{figure}

\begin{figure}
\centering
\includegraphics[width=0.57\textwidth]{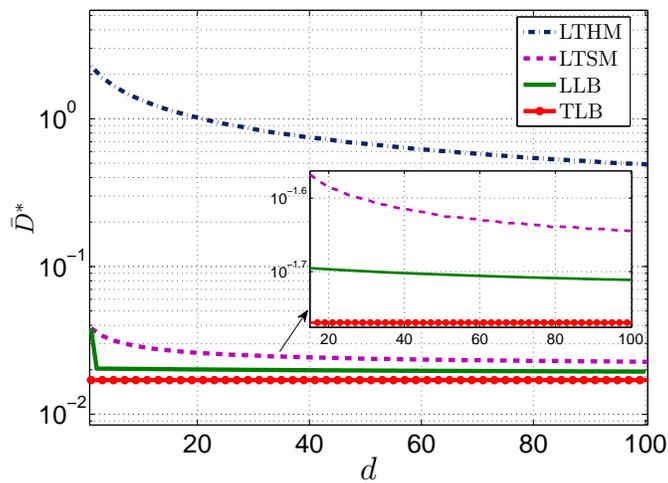}
\caption{MSE distortion versus delay constraint, $d$, in the continuous fading channel model for an average power constraint $\bar{P}=10$ dB.}
\label{fig:DelayPerformanceELT}
\end{figure}

\begin{figure}
\centering
\includegraphics[width=0.57\textwidth]{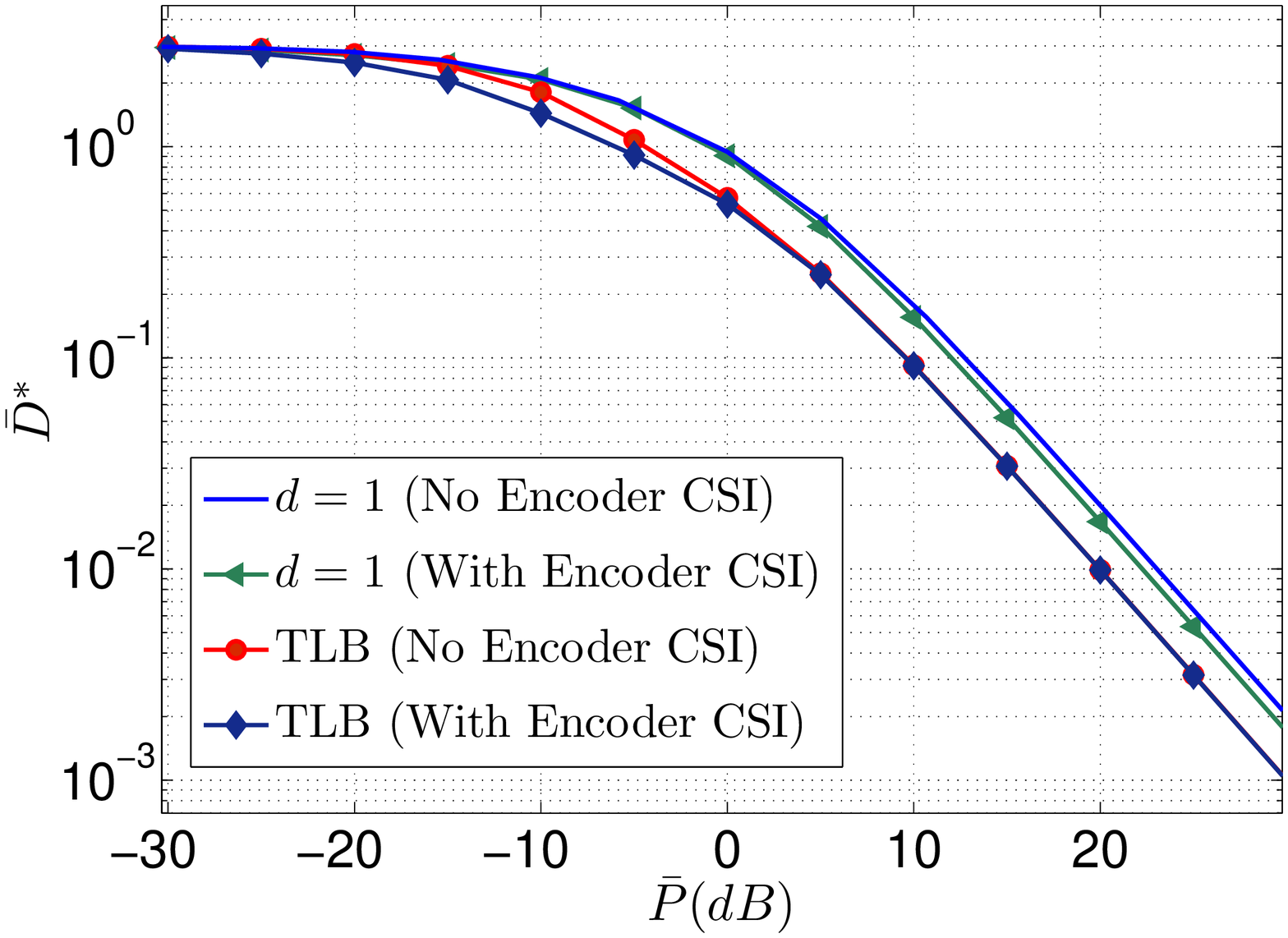}
\caption{The achievable MSE distortion of LT and the TLB with respect to average power in the discrete fading channel model with and without encoder CSI.}
\label{fig:StrictDelayCompareNoCSI}
\end{figure}

Next, we compare the performances of LTHM and LTSM with each other and with the TLB and the LLB. Fig.~\ref{fig:DelayPerformanceELT} shows the achievable MSE distortion of LTHM, LTSM, the LLB and the TLB with respect to delay constraint in the continuous fading channel model for an average power constraint $\bar{P}=10$ dB. As seen in the figure, the performance of the TLB is constant since it is derived by completely removing the delay and complexity constraints. On the other hand, the LLB decays slowly as the delay constraint increases. As expected, the MSE distortion of LTHM and LTSM decrease as the delay constraint increases. We can see that LTSM meets the LLB under the strict delay constraint. As expected, LTSM always outperforms LTHM, while the gap between the two schemes decreases with the increasing delay constraint. The gap between the TLB and two schemes also decreases with the increasing delay constraint even though we do not expect either of the schemes converge to the TLB in this setting since the matching conditions of Theorem~\ref{Theorem1} do not hold.

Finally, in Fig.~\ref{fig:StrictDelayCompareNoCSI}, we illustrate the achievable MSE distortion of LT and the TLB with respect to average power in the discrete channel model for the scenarios in which the CSI is known only by the decoder, and by both the encoder and decoder. The MSE distortion of LT under strict delay constraint of $d=1$ for both scenarios diminishes as the average power increases. However, there is a constant gap between the optimal performances achieved with and without encoder CSI at higher $\bar{P}$ values. On the other hand, the TLB for both scenarios meet as the average power increases since the gain from the optimal power allocation over different channel states disappears in the high power regime.

\section{Conclusions}
\label{s:Conclusions}
We have studied the delay-constrained LT of composite Gaussian measurements from a sensor to a CC over an AWGN fading channel. We have considered a wireless sensor that can collect measurements from $J$ distinct Gaussian parameters. The CC asks for a measurement of a particular parameter from the sensor with a certain probability at each TS. In this framework, we have presented the optimal LT strategy under a strict delay constraint, and have given a graphical interpretation for the optimal power allocation scheme and the corresponding distortion value. Then, we have proposed two LT strategies, called LTHM and LTSM, under general delay constraints, and have provided numerical results to investigate the impact of the delay and average power constraints on the performance. We have seen that, if the number of parameters, $J$, is more than one, the MSE distortion decreases as the delay constraint is relaxed. We have also derived lower bounds on the achievable MSE distortion for generic and LT strategies. While LTSM outperforms LTHM at all delay constraints, we have shown analytically that both strategies meet the lower bound when the delay constraint is removed, under certain matching conditions between the parameter and the channel statistics.

We have also studied the scenario in which the CSI is known only by the decoder. We have presented the optimal LT strategy under a strict delay constraint. We have derived a TLB on the achievable MSE distortion by relaxing the delay constraint and the linearity requirement. We have also considered the multiple measurements-parallel channels scenario under a strict delay constraint, and have shown that the optimal LT performance cannot be achieved by using only a one-to-one linear mapping between measurements and channels, as opposed to the results derived in~\cite{OptimalLinearCoding} and~\cite{LinearJointSourceChannelCoding}. The design of the optimal LT strategy for the multiple measurements-parallel channels scenario for arbitrary delay constraints is elusive, and is left as future work.

\appendices

\section{Proof of Theorem 1}
Given a delay constraint $d=2\bar{d}-1$, let the r.v. $\bar{Z}_m$, $m\in[1\text{:}J]$, denote the total number of measurements of parameter $m$ among $\bar{d}$ measurements loaded into the TB. $\bar{Z}_m$ follows a Binomial distribution with parameters $\bar{d}$ and $p_M(m)$. Hence, the probability of having $\bar{k}$ measurements of parameter $m$ in the TB is given by $p_{\bar{Z}_m}(\bar{k})=\mathrm{Pr}\{\bar{Z}_m=\bar{k}\}=\binom{\bar{d}}{\bar{k}}p_M(m)^{\bar{k}}(1-p_M(m))^{\bar{d}-\bar{k}}$. Similarly, considering the discrete fading model presented in Section~\ref{s:AsymptoticOptimalityOfLT}, let the r.v. $\hat{Z}_m$, $m\in[1\text{:}J]$, denote the total number of channels with state $\hat{h}_m$, after $\bar{d}$ channel accesses. $\hat{Z}_m$ also follows a Binomial distribution with parameters $\bar{d}$ and $p_H(\hat{h}_m)$. Hence, the probability of observing $\hat{k}$ channels with state $\hat{h}_m$ is given by $p_{\hat{Z}_m}(\hat{k})=\mathrm{Pr}\{\hat{Z}_m=\hat{k}\}=\binom{\bar{d}}{\hat{k}}p_H(\hat{h}_m)^{\hat{k}}(1-p_H(\hat{h}_m))^{\bar{d}-\hat{k}}$.

Observe that after $\bar{d}$ channel accesses, the number of transmitted measurements selected from the TB with LTHM is given by $\text{min}\{\bar{Z}_m,\hat{Z}_m\}$. On the other hand, the number of untransmitted measurements remained in the TB is given by $[\bar{Z}_m-\hat{Z}_m]^{+}$. Then, the average power, $\bar{P}_{\infty}$, and the achievable MSE distortion, $\bar{D}_{\infty}$, of LTHM when $\bar{d}\to\infty$ are given by$\colon$

%\begin{align}
%\label{eq:LTHM_PowerInfinitedelay}
%\hspace{-0.1cm}\bar{P}_{\infty}\triangleq\underset{\bar{d}\to\infty}{\lim}\frac{1}{\bar{d}}\sum\limits_{m=1}^{J}\mathrm{E}_{\bar{Z}_m,\hat{Z}_m}\Big[\text{min}\left\{\bar{Z}_m,\hat{Z}_m\right\}\Big]P(\hat{h}_m,m),
%\end{align}
%\begin{align}
%\label{eq:LTHM_DistortionInfinitedelay}
%\hspace{-0.35cm}\bar{D}_{\infty}&\triangleq\lim_{\bar{d}\to\infty}\frac{1}{\bar{d}}\sum\limits_{m=1}^{J}\left\{\mathrm{E}_{\bar{Z}_m,\hat{Z}_m}\Big[[\bar{Z}_m-\hat{Z}_m]^{+}\Big]\sigma_m^2
%\right.
%\nonumber \\
%  &\left.
%\hspace{1.2cm}+\mathrm{E}_{\bar{Z}_m,\hat{Z}_m}\Big[\text{min}\left\{\bar{Z}_m,\hat{Z}_m\right\}\Big]D(\hat{h}_m,m)\right\},
%\end{align}

\begin{align}
\label{eq:LTHM_PowerInfinitedelay}
\hspace{-6.5cm}\bar{P}_{\infty}\triangleq\underset{\bar{d}\to\infty}{\lim}\frac{1}{\bar{d}}\sum\limits_{m=1}^{J}\mathrm{E}_{\bar{Z}_m,\hat{Z}_m}\Big[\text{min}\left\{\bar{Z}_m,\hat{Z}_m\right\}\Big]P(\hat{h}_m,m),
\end{align}
\begin{align}
\label{eq:LTHM_DistortionInfinitedelay}
\hspace{-0.35cm}\bar{D}_{\infty}&\triangleq\lim_{\bar{d}\to\infty}\frac{1}{\bar{d}}\sum\limits_{m=1}^{J}\left\{\mathrm{E}_{\bar{Z}_m,\hat{Z}_m}\Big[[\bar{Z}_m-\hat{Z}_m]^{+}\Big]\sigma_m^2+\mathrm{E}_{\bar{Z}_m,\hat{Z}_m}\Big[\text{min}\left\{\bar{Z}_m,\hat{Z}_m\right\}\Big]D(\hat{h}_m,m)\right\},
\end{align}

\noindent where the allocated power $P(\hat{h}_m,m)$ and the distortion $D(\hat{h}_m,m)$ are chosen as in Eqn.~(\ref{eq:PowerAllocationLTHM}) and Eqn.~(\ref{eq:DistortionLTHM}), respectively.

In the rest of the proof, we use $p(m)$ to refer to the condition of Theorem~\ref{Theorem1}, i.e., $p_M(m)=p_H(\hat{h}_m)=p(m)$, $\forall m$. Under this condition, the expected value and variance of $\bar{Z}_m$ and $\hat{Z}_m$ can be found as, $\mathrm{E}[\bar{Z}_m]=\mathrm{E}[\hat{Z}_m]=\bar{d}\cdot p(m)$ and $\mathrm{Var}[\bar{Z}_m]=\mathrm{Var}[\hat{Z}_m]=\sigma_{{Z}_m}^2=\bar{d}\cdot p(m)\cdot(1-p(m))$, respectively. Let $\epsilon>0$ be any positive number. Then, the Chebyshev's inequality leads to the following inequalities, $\mathrm{Pr}\{ |\bar{Z}_m-\bar{d}\cdot p(m)| \ge\epsilon\cdot\sigma_{Z_m}\}\le\frac{1}{\epsilon^2}$ and $\mathrm{Pr}\{ |\hat{Z}_m-\bar{d}\cdot p(m)| \ge\epsilon\cdot\sigma_{Z_m}\}\le\frac{1}{\epsilon^2}$. We define the interval $\mathcal{I}$ on the real line as, $\mathcal{I}=[\bar{d}\cdot p(m)-\epsilon\cdot\sigma_{Z_m},\bar{d}\cdot p(m)+\epsilon\cdot\sigma_{Z_m}]$.

Next, we compute~(\ref{eq:LTHM_PowerInfinitedelay}) and~(\ref{eq:LTHM_DistortionInfinitedelay}) by finding upper and lower bounds on the expectation terms under the matching condition. Observe that,
\vspace{-0.2cm}
\small
\begin{align}
\label{eq:ExpBound1}
\lim_{\bar{d}\to\infty}\frac{1}{\bar{d}}\mathrm{E}_{\bar{Z}_m,\hat{Z}_m}\Big[\text{min}\left\{\bar{Z}_m,\hat{Z}_m\right\}\Big]\le&\lim_{\bar{d}\to\infty}\frac{1}{\bar{d}}\mathrm{E}_{\bar{Z}_m,\hat{Z}_m}\big[\bar{Z}_m\big]=p(m).
\end{align}
\normalsize

We can also lower bound this term as,

\vspace{-0.5cm}
\begin{align}
\label{eq:ExpLowerBound1}
&\lim_{\bar{d}\to\infty}\frac{1}{\bar{d}}\mathrm{E}_{\bar{Z}_m,\hat{Z}_m}\Big[\text{min}\left\{\bar{Z}_m,\hat{Z}_m\right\}\Big],\\
\label{eq:ExpLowerBound2}
&\ge\lim_{\bar{d}\to \infty}\frac{1}{\bar{d}}\mathrm{E}_{\bar{Z}_m,\hat{Z}_m}\Big[\text{min}\left\{\bar{Z}_m,\hat{Z}_m\right\}\Big|_{\begin{subarray}{l}\bar{Z}_m\in\mathcal{I},\\\hat{Z}_m\in\mathcal{I}\end{subarray}}\Big]\scriptstyle{\mathrm{Pr}\left\{\bar{Z}_m\in\mathcal{I},\hat{Z}_m\in\mathcal{I}\right\}},\\
\label{eq:ExpLowerBound3}
&\ge\lim_{\bar{d}\to\infty}\frac{1}{\bar{d}}\left(\bar{d}p(m)-\epsilon\sigma_{{Z}_m}\right)\left(1-\frac{1}{\epsilon^2}\right)^2,\\
\label{eq:ExpLowerBound4}
&=\lim_{\bar{d}\to\infty}\left(p(m)-\frac{\sqrt{p(m)(1-p(m))}}{\bar{d}^{\frac{1}{6}}}\right)\left(1-\frac{1}{\bar{d}^{\frac{2}{3}}}\right)^2=p(m),
\end{align}

\noindent where~(\ref{eq:ExpLowerBound2}) follows from the law of total expectation;~(\ref{eq:ExpLowerBound3}) follows from the definition of $\mathcal{I}$, and the Chebyshev's inequality; and~(\ref{eq:ExpLowerBound4}) is obtained by setting $\epsilon=\bar{d}^{\frac{1}{3}}$. Since the upper and lower bounds in~(\ref{eq:ExpBound1}) and~(\ref{eq:ExpLowerBound4}) are equal, we have shown that~(\ref{eq:ExpBound1}) converges to $p(m)$ as $\bar{d}\to\infty$.

Similarly,
\vspace{-0.2cm}
\begin{align}
\label{eq:ExpBound6}
&\lim_{\bar{d}\to\infty}\frac{1}{\bar{d}}\mathrm{E}_{\bar{Z}_m,\hat{Z}_m}\Big[[\bar{Z}_m-\hat{Z}_m]^{+}\Big],\\
\label{eq:ExpBound7}
=&\lim_{\bar{d}\to \infty}\frac{1}{\bar{d}}\Bigg\{\mathrm{E}_{\bar{Z}_m,\hat{Z}_m}\Big[[\bar{Z}_m-\hat{Z}_m]^{+}\Big|_{\begin{subarray}{l}\bar{Z}_m\in\mathcal{I},\\\hat{Z}_m\in\mathcal{I}\end{subarray}}\Big]\scriptstyle{\mathrm{Pr}\left\{\bar{Z}_m\in\mathcal{I},\hat{Z}_m\in\mathcal{I}\right\}}\nonumber\\
+&\mathrm{E}_{\bar{Z}_m,\hat{Z}_m}\Big[[\bar{Z}_m-\hat{Z}_m]^{+}\Big|_{\begin{subarray}{l}\bar{Z}_m\not\in\mathcal{I}\\\hspace{0.3cm}\text{or}\\ \hat{Z}_m\not\in\mathcal{I}\end{subarray}}\Big]\scriptstyle{\mathrm{Pr}\{\bar{Z}_m\not\in\mathcal{I}\hspace{0.1cm}\text{or}\hspace{0.1cm}\hat{Z}_m\not\in\mathcal{I}\}}\Bigg\},\\
\label{eq:ExpBound8}
\le&\underset{\bar{d}\to\infty}{\lim}\frac{1}{\bar{d}}\Bigg\{2\epsilon\sigma_{{Z}_m}+\left(\frac{2}{\epsilon^2}+\frac{1}{\epsilon^4}\right)\bar{d}\Bigg\},\\
\label{eq:ExpBound9}
=&\underset{\bar{d}\to\infty}{\lim}\Bigg\{\left(\frac{2\sqrt{p(m)(1-p(m))}}{\bar{d}^{\frac{1}{6}}}\right)+\left(\frac{2}{\bar{d}^{\frac{2}{3}}}+\frac{1}{\bar{d}^{\frac{4}{3}}}\right)\Bigg\}=0,
\end{align}

\noindent where~(\ref{eq:ExpBound7}) follows from the law of total expectation;~(\ref{eq:ExpBound8}) follows from the from the definition of $\mathcal{I}$, and the Chebyshev's inequality; and~(\ref{eq:ExpBound9}) is obtained by setting $\epsilon=\bar{d}^{\frac{1}{3}}$. This proves that~(\ref{eq:ExpBound6}) indeed converges to zero as $\bar{d}\to\infty$. This also implies that as $\bar{d}\to\infty$, all selected measurements by the LTSM strategy satisfy the hard matching condition. Hence, LTSM and LTHM are equivalent in the asymptotic of $\bar{d}\to\infty$ under the matching condition of Theorem~\ref{Theorem1}.

Finally, we can rewrite $\bar{P}_{\infty}$ and $\bar{D}_{\infty}$ for both LTHM and LTSM as$\colon$

\vspace{-0.8cm}
\begin{align}
\label{eq:appendix1}
\bar{P}_{\infty}&=\sum\limits_{m=1}^{J}\Big[{\mu^{\ast}}q-\frac{1}{|\hat{h}_m|^2}\Big]^+p(m),\\
\label{eq:appendix1-dist}
\bar{D}_{\infty}&=\sum\limits_{m=1}^{J}\Bigg[\frac{\sigma_{m}^2}{|\hat{h}_m|^2\Big[{\mu^{\ast}}q-\frac{1}{|\hat{h}_m|^2}\Big]^++1}\Bigg]p(m),
\end{align}

\noindent where we use $q\triangleq\frac{\sigma_m}{|\hat{h}_m|}$, $\forall m$, from Theorem~\ref{Theorem1}, and $\mu^{\ast}$ is chosen to satisfy $\bar{P}_{\infty}=P$.

Next, we show that $(\bar{P}_{\infty},\bar{D}_{\infty})$ pair above, obtained under the conditions of Theorem~\ref{Theorem1}, achieve the TLB pair $(\bar{P}_{e},\bar{D}_{e})$, derived in Section~\ref{s:TheoreticalLowerBound}. First, under the matching condition, observe that ${\mu^{\ast}}q=\alpha^*$, and thus, $\bar{P}_{\infty}=\bar{P}_{e}=P$. Moreover, under the matching condition, $\bar{R}_{e}=\bar{C}_e$ in TLB implies $\alpha^{\ast}=\frac{q^2}{\beta^{\ast}}$. Combining the two equalities, we obtain $\mu^{\ast}=\frac{q}{\beta^{\ast}}$. Substituting this into Eqn.~(\ref{eq:DistortionRate}) together with the matching condition, we can show that $\bar{D}_{e}=\sum\limits_{m=1}^{J}\min\left(\frac{q}{\mu^{\ast}},\sigma_{m}^2\right)p(m)=\bar{D}_{\infty}$, which concludes the proof of Theorem~\ref{Theorem1}.

\section{Proof of Lemma 3}
In order to prove Lemma~\ref{Lemma3}, we construct a counter-example. We argue that the achievable MSE distortion of a particular LT scheme that is not constrained to use only a one-to-one mapping between measurements and channels can be smaller than the minimum achievable MSE distortion of all possible LT schemes that use only a one-to-one mapping, i.e., a diagonal encoding matrix. Suppose we have $J=2$ zero-mean Gaussian parameters with variances $\sigma_1^2$ and $\sigma_2^2$, which are requested with probabilities $p_M(1)=p_1$ and $p_M(2)=p_2=(1-p_1)$, respectively, and assume an extreme case, where $\sigma_1^2>0$ and $\sigma_2^2=0$. Suppose we have a discrete fading channel with two states, which are observed with probabilities $p_{H_1}(\hat{h}_1)=p_1$ and $p_{H_2}(\hat{h}_2)=p_2$, respectively, and assume that the channel states are $\hat{h}_1>0$ and $\hat{h}_2=0$. We aim at linearly transmitting $N=2$ measurements of parameters $m_1\in[1\text{:}2]$ and $m_2\in[1\text{:}2]$, over $N=2$ channel states $h_1\in\{\hat{h}_1,\hat{h}_2\}$ and $h_2\in\{\hat{h}_1,\hat{h}_2\}$.

We first characterize the minimum achievable MSE distortion, $\bar{D}_1$, for all possible LT schemes with a diagonal encoding matrix. According to Eqn.~(\ref{eq:AveragePowerParallelChannels}), the encoding function needs to satisfy the average power constraint $P$, i.e., $\frac{1}{2}\left[P_{11}p_1^2+P_{12}p_1p_2+P_{21}p_1p_2+P_{22}p_2^2\right]=P$, where $P_{m_1m_2}$ is the allocated power for the pair of measurements of parameters $m_1$ and $m_2$, respectively. We have $P_{22}=0$, since $\sigma_2^2=0$. Then, by using Eqn.~(\ref{eq:AverageDistortionParallelChannels}), the MSE distortion $\bar{D}_1$ can be written explicitly as follows$\colon$

%\begin{align}
%\label{eq:AverageDistortion-Diagonal1}
%\bar{D}_1&=\frac{1}{2}\left\{p_1^2\left(\mathrm{E}_{H_1}\left[\frac{\sigma_{1}^2}{|h_1|^2\frac{P_{11}}{2}+1}\right]+\mathrm{E}_{H_2}\left[\frac{\sigma_{1}^2}{|h_2|^2\frac{P_{11}}{2}+1}\right]\right)
%\right.
%\nonumber \\
% & \left.
% +\textstyle{p_1p_2\left(\mathrm{E}_{H_1}\left[\frac{\sigma_{1}^2}{|h_1|^2P_{12}+1}\right]+\mathrm{E}_{H_2}\left[\frac{\sigma_{1}^2}{|h_2|^2P_{21}+1}\right]\right)}\right\},\\
% \label{eq:AverageDistortion-Diagonal2}
% &=p_1^2\left(p_1\frac{\sigma_{1}^2}{|\hat{h}_1|^2\frac{P_{11}}{2}+1}+p_2\sigma_{1}^2\right)
%\nonumber \\
% &+\frac{p_1p_2}{2}\left(p_1\frac{\sigma_{1}^2}{|\hat{h}_1|^2P_{12}+1}+p_1\frac{\sigma_{1}^2}{|\hat{h}_1|^2P_{21}+1}+2p_2\sigma_{1}^2\right),
%\end{align}
\vspace{-0.6cm}
\begin{align}
\label{eq:AverageDistortion-Diagonal1}
\bar{D}_1&=\frac{1}{2}\left\{p_1^2\left(\mathrm{E}_{H_1}\left[\frac{\sigma_{1}^2}{|h_1|^2\frac{P_{11}}{2}+1}\right]+\mathrm{E}_{H_2}\left[\frac{\sigma_{1}^2}{|h_2|^2\frac{P_{11}}{2}+1}\right]\right)
\right.
\nonumber \\
 & \left.
 +p_1p_2\left(\mathrm{E}_{H_1}\left[\frac{\sigma_{1}^2}{|h_1|^2P_{12}+1}\right]+\mathrm{E}_{H_2}\left[\frac{\sigma_{1}^2}{|h_2|^2P_{21}+1}\right]\right)\right\},\\
 \label{eq:AverageDistortion-Diagonal2}
 &=\textstyle{p_1^2\left(p_1\frac{\sigma_{1}^2}{|\hat{h}_1|^2\frac{P_{11}}{2}+1}+p_2\sigma_{1}^2\right)}+\textstyle{\frac{p_1p_2}{2}\left(p_1\frac{\sigma_{1}^2}{|\hat{h}_1|^2P_{12}+1}+p_1\frac{\sigma_{1}^2}{|\hat{h}_1|^2P_{21}+1}+2p_2\sigma_{1}^2\right)},
\end{align}

\noindent where the minimum distortion is achieved by dividing the power, i.e., $P_{11}$, equally between measurements if two measurements are observed from parameter $1$, i.e., $m_1=m_2=1$. If one measurement is requested from each parameter, i.e., $(m_1=1,m_2=2)$ or $(m_1=2,m_2=1)$, then the minimum distortion is achieved by allocating the entire power, i.e., $P_{12}$ or $P_{21}$, to the measurement of parameter $1$, since $\sigma_2^2=0$.

Assuming the average power constraint $P$ is satisfied as in the above scheme, we next consider a particular LT scheme. This scheme uses a diagonal encoding matrix if both measurements are observed from the same parameter; otherwise, it uses a non-diagonal matrix, where the measurement of parameter $1$ is transmitted over two channels. Then, from Eqn.~(\ref{eq:AverageDistortionParallelChannels}), the MSE distortion $\bar{D}_2$ can be written as follows$\colon$

%\begin{align}
%\label{eq:AverageDistortion-NonDiagonal1}
%&\bar{D}_2=\frac{1}{2}\left\{p_1^2\left(\mathrm{E}_{H_1}\left[\frac{\sigma_{1}^2}{|h_1|^2\frac{P_{11}}{2}+1}\right]+\mathrm{E}_{H_2}\left[\frac{\sigma_{1}^2}{|h_2|^2\frac{P_{11}}{2}+1}\right]\right)\right.
%\nonumber \\
% &\left.+p_1p_2\left(\mathrm{E}_{H_1,H_2}\left[\frac{\sigma_{1}^2}{(|h_1|^2+|h_2|^2)\frac{P_{12}}{2}+1}\right]\right.\right.
%\nonumber \\
% &\left.\left.
% +\mathrm{E}_{H_1,H_2}\left[\frac{\sigma_{1}^2}{(|h_1|^2+|h_2|^2)\frac{P_{21}}{2}+1}\right]\right)\right\},\\
% \label{eq:AverageDistortion-NonDiagonal2}
% &=\textstyle{p_1^2\left(p_1\frac{\sigma_{1}^2}{|\hat{h}_1|^2\frac{P_{11}}{2}+1}+p_2\sigma_{1}^2\right)+\frac{p_1p_2}{2}\left(2p_2^2\sigma_{1}^2+p_1^2\frac{\sigma_{1}^2}{|\hat{h}_1|^2P_{12}+1}
%  \right.}
% \nonumber \\
% &\textstyle{\left.
% +p_1^2\frac{\sigma_{1}^2}{|\hat{h}_1|^2P_{21}+1}+2p_1p_2\frac{\sigma_{1}^2}{|\hat{h}_1|^2\frac{P_{12}}{2}+1}+2p_1p_2\frac{\sigma_{1}^2}{|\hat{h}_1|^2\frac{P_{21}}{2}+1}\right)},
%\end{align}
\vspace{-0.6cm}
\begin{align}
\label{eq:AverageDistortion-NonDiagonal1}
&\bar{D}_2=\textstyle{\frac{1}{2}\left\{p_1^2\left(\mathrm{E}_{H_1}\left[\frac{\sigma_{1}^2}{|h_1|^2\frac{P_{11}}{2}+1}\right]+\mathrm{E}_{H_2}\left[\frac{\sigma_{1}^2}{|h_2|^2\frac{P_{11}}{2}+1}\right]\right)+p_1p_2\left(\mathrm{E}_{H_1,H_2}\left[\frac{\sigma_{1}^2}{(|h_1|^2+|h_2|^2)\frac{P_{12}}{2}+1}\right]\right.\right.}
\nonumber \\
 &\left.\textstyle{\left.
 +\mathrm{E}_{H_1,H_2}\left[\frac{\sigma_{1}^2}{(|h_1|^2+|h_2|^2)\frac{P_{21}}{2}+1}\right]\right)}\right\},\\
 \label{eq:AverageDistortion-NonDiagonal2}
 &=\textstyle{p_1^2\left(p_1\frac{\sigma_{1}^2}{|\hat{h}_1|^2\frac{P_{11}}{2}+1}+p_2\sigma_{1}^2\right)+\frac{p_1p_2}{2}\left(2p_2^2\sigma_{1}^2+p_1^2\frac{\sigma_{1}^2}{|\hat{h}_1|^2P_{12}+1}
  \right.}
 \nonumber \\
 &\textstyle{\left.
 +p_1^2\frac{\sigma_{1}^2}{|\hat{h}_1|^2P_{21}+1}+2p_1p_2\frac{\sigma_{1}^2}{|\hat{h}_1|^2\frac{P_{12}}{2}+1}+2p_1p_2\frac{\sigma_{1}^2}{|\hat{h}_1|^2\frac{P_{21}}{2}+1}\right)},
\end{align}

\noindent where the minimum distortion can be achieved by dividing the power, i.e., $P_{11}$, equally between measurements if two measurements are observed from parameter $1$, i.e., $m_1=m_2=1$, similarly to the above scheme. If one measurement is requested from each parameter, i.e., $(m_1=1,m_2=2)$ or $(m_1=2,m_2=1)$, then this particular scheme divides the power, i.e., $P_{12}$ or $P_{21}$, equally between two channels $h_1$ and $h_2$ for the transmission of the measurement of parameter $1$, as seen in the term multiplied by $p_1p_2$ in~(\ref{eq:AverageDistortion-NonDiagonal1}). If two measurements are observed from parameter $2$, i.e., $m_1=m_2=2$, then we do not allocate power, i.e., $P_{22}=0$, since $\sigma_2^2=0$.

We can easily see that $\bar{D}_2<\bar{D}_1$ for all $P_{11}$, $P_{12}$ and $P_{21}$. This implies that the minimum achievable MSE distortion of LT schemes constrained to one-to-one mapping can be improved by utilizing non-diagonal encoding matrices, which concludes the proof of Lemma~\ref{Lemma3}.

\bibliographystyle{MyIEEEtran}
\bibliography{TWC_arxiv}

\end{document}